\documentclass[journal,onecolumn]{IEEEtran}
\usepackage{amsmath}
\usepackage{amssymb}
\usepackage{mathrsfs}
\usepackage{cite}
\usepackage{epsfig}
\usepackage{epsf}
\usepackage{theorem}
\usepackage{graphics}
\usepackage{hyperref}

%\usepackage[active]{srcltx}
%\topmargin .15 in
%\topmargin     -10mm
%\textheight 235mm
%\columnsep 4mm
%\topmargin .17 in
%\renewcommand{\baselinestretch}{1.0}
%\usepackage{setspace}
%\onehalfspacing

\newtheorem{thm}{Theorem}
\newtheorem{lem}{Lemma}

\newtheorem{defi}{Definition}

\newtheorem{rem}{Remark}

\pagenumbering{arabic}

\pagestyle{plain}

\begin{document}

%opening
\title{The Secrecy Rate Region of the Broadcast Channel}

\author{Ghadamali Bagherikaram, Abolfazl S. Motahari, Amir K. Khandani\\
Coding and Signal Transmission Laboratory,\\
 Department of
Electrical
and Computer Engineering,\\
 University of Waterloo, Waterloo, Ontario,
 N2L 3G1\\
 Emails: \{gbagheri,abolfazl,khandani\}@cst.uwaterloo.ca}

\maketitle
\begin{abstract}
In this paper, we consider a scenario where a source node wishes to
broadcast two confidential messages for two respective receivers,
while a wire-tapper also receives the transmitted signal. This model
is motivated by wireless communications, where individual secure
messages are broadcast over open media and can be received by any
illegitimate receiver. The secrecy level is measured by equivocation
rate at the eavesdropper. We first study the general (non-degraded)
broadcast channel with confidential messages. We present an inner
bound on the secrecy capacity region for this model. The inner bound
coding scheme is based on a combination of random binning and the
Gelfand-Pinsker bining. This scheme matches the Marton's inner bound
on the broadcast channel without confidentiality constraint. We
further study the situation where the channels are degraded. For the
degraded broadcast channel with confidential messages, we present
the secrecy capacity region. Our achievable coding scheme is based
on Cover's superposition scheme and random binning. We refer to this
scheme as Secret Superposition Scheme. In this scheme, we show that
randomization in the first layer increases the secrecy rate of the
second layer. This capacity region matches the capacity region of
the degraded broadcast channel without security constraint. It also
matches the secrecy capacity for the conventional wire-tap channel.
Our converse proof is based on a combination of the converse proof
of the conventional degraded broadcast channel and Csiszar lemma.
Finally, we assume that the channels are Additive White Gaussian
Noise (AWGN) and show that secret superposition scheme with Gaussian
codebook is optimal. The converse proof is based on the generalized
entropy power inequality.
\end{abstract}
\section{Introduction}
The notion of information theoretic secrecy in communication systems
was first introduced by Shannon in \cite{1}. The information
theoretic secrecy requires that the received signal of the
eavesdropper does not provide even a single bit information about
the transmitted messages. Shannon considered a pessimistic situation
where both the intended receiver and the eavesdropper have direct
access to the transmitted signal (which is called ciphertext). Under
these circumstances, he proved a negative result showing that
perfect secrecy can be achieved only when the entropy of the secret
key is greater than or equal to the entropy of the message. In
modern cryptography, all practical cryptosystems are based on
Shannnon's pessimistic assumption. Due to practical constraints,
secret keys are much shorter than messages. Therefore, these
practical cryptosystems are theoretically susceptible of breaking by
attackers. However, the goal of designing such practical ciphers is
to guarantee that there exists no efficient algorithm for breaking
them.

Wyner in \cite{2} showed that the above negative result is a
consequence of Shannon's restrictive assumption that the adversary
has access to precisely the same information as the legitimate
receiver. Wyner considered a scenario in which a wire-tapper
receives the transmitted signal over a degraded channel with respect
to the legitimate receiver's channel. He further assumed that the
wire-tapper has no computational limitations and knows the codebook
used by the transmitter. He measured the level of ignorance at the
eavesdropper by its equivocation and characterized the
capacity-equivocation region. Interestingly, a non-negative perfect
secrecy capacity is always achievable for this scenario.

The secrecy capacity for the Gaussian wire-tap channel is
characterized by Leung-Yan-Cheong in \cite{3}. Wyner's work then is
extended to the general (non-degraded) broadcast channel with
confidential messages (BCC) by Csiszar and Korner \cite{4}. They
considered transmitting confidential information to the legitimate
receiver while transmitting common information to both the
legitimate receiver and the wire-tapper. They established a
capacity-equivocation region of this channel. %In this model, when
%the main channel is less noisy or more capable than
%the eavesdropper's channel, it is possible to achieve a non-zero secrecy capacity.\\

The BCC is further studied recently in \cite{5,6,7}, where the
source node transmits a common message for both receivers, along
with two additional confidential messages for two respective
receivers. The fading BCC is investigated in \cite{8,9} where the
broadcast channels from the source node to the legitimate receiver
and the eavesdropper is corrupted by multiplicative fading gain
coefficients, in addition to additive white Gaussian noise terms.
The Channel State Information (CSI) is assumed to be known at the
transmitter. In \cite{10}, the perfect secrecy capacity is derived
where the channels are slow fading. Moreover, the optimal power
control policy is obtained for different scenarios regarding
availability of CSI. In \cite{11}, the wire-tap channel is extended
to the parallel broadcast channels and the fading channels with
multiple receivers. Here, the secrecy constraint is a perfect
equivocation for each messages, even if all the other messages are
revealed to the eavesdropper. The secrecy sum capacity for a reverse
broadcast channel is derived for this restrictive assumption. The
notion of the wire-tap channel is also extended to multiple access
channels \cite{12,13,14,15}, relay channels \cite{16,17,18,19},
parallel channels \cite{20} and MIMO channels
\cite{21,22,23,24,25,26}. Some other related works on communication
of confidential messages can be found in \cite{27,28,29,30,31}.

In this paper, we consider a scenario where a source node wishes to
broadcast two confidential messages for two respective receivers,
while a wire-tapper also receives the transmitted signal. This model
is motivated by wireless communications, where individual secure
messages are broadcast over shared media and can be received by any
illegitimate receiver. In fact, we simplify the restrictive
constraint imposed in \cite{11} and assume that the eavesdropper
does not have access to the other messages. We first study the
general broadcast channel with confidential messages. We present an
achievable rate region for this channel. Our achievable coding
scheme is based on the combination of the random binning and the
Gelfand-Pinsker bining \cite{32}. This scheme matches the Marton's
inner bound \cite{33} on the broadcast channel without
confidentiality constraint. We further study the situation where the
channels are physically degraded and characterize the secrecy
capacity region. Our achievable coding scheme is based on Cover's
superposition coding \cite{34} and the random binning. We refer to
this scheme as Secret Superposition Coding. This capacity region
matches the capacity region of the degraded broadcast channel
without security constraint. It also matches the secrecy capacity of
the wire-tap channel.

The rest of the paper is organized as follows. In section II we
introduce the system model. In Section III, we provide an inner
bound on the secrecy capacity region when the channels are not
degraded. In section IV, we specialize our channel to the physically
degraded and establish the secrecy capacity region. In Section V, we
conclude the paper.

\section{Preliminaries}

In this paper, a random variable is denoted by a capital letter
(e.g. X) and its realization is denoted by a corresponding lower
case letter (e.g. x). The finite alphabet of a random variable is
denoted by a script letter (e.g. $\mathcal{X}$) and its probability
distribution is denoted by $P(x)$. Let  $\mathcal{X}$ be a finite
alphabet set and denote its cardinality by $|\mathcal{X}|$. The
members of  $\mathcal{X}^{n}$ will be written as
$x^{n}=(x_{1},x_{2},...,x_{n})$, where subscripted letters denote
the components and superscripted letters denote the vector. The
notation $x^{i-1}$ denotes the vector $(x_{1},x_{2},...,x_{i-1})$. A
similar notation will be used for random variables and random
vectors.

Consider a Broadcast Channel with Confidential messages as depicted
in fig.\ref{f1}.
\begin{figure}
\centerline{\includegraphics[scale=.6]{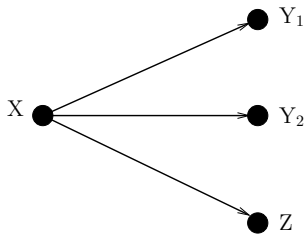}} \caption{Broadcast
Channel with Confidential Messages} \label{f1}
\end{figure}
 In this confidential setting, the transmitter ($X$) wants to broadcast some secret messages to the
legitimated receivers ($Y_{1}$,$Y_{2}$), and prevent the
eavesdropper ($Z$) from having any information about the messages. A
discrete memoryless broadcast channel with confidential messages is
described by finite sets $\mathcal{X}$,
$\mathcal{Y}_{1}$,$\mathcal{Y}_{2}$,$\mathcal{Z}$, and a conditional
distribution $P(y_{1},y_{2},z|x)$. The input of the channel is
$x\in\mathcal{X}$ and the outputs are
$(y_{1},y_{2},z)\in(\mathcal{Y}_{1}\times\mathcal{Y}_{2}\times\mathcal{Z})$
for receiver $1$, receiver $2$, and the eavesdropper, respectively.
The transmitter wishes to send independent messages $(W_{1},W_{2})$
to the respective receivers in $n$ uses of the channel while
insuring perfect secrecy. The channel is discrete memoryless in the
sense that
\begin{equation}
P(y_{1}^{n},y_{2}^{n},z^{n}|x^{n})=\prod_{i=1}^{n}P(y_{1,i},y_{2,i},z_{i}|x_{i}).
\end{equation}
A $((2^{nR_{1}},2^{nR_{2}}),n)$ code for a broadcast channel with
confidential messages consists of a stochastic encoder
\begin{equation}
f:(\{1,2,...,2^{nR_{1}}\}\times\{1,2,...,2^{nR_{2}}\})\rightarrow
\mathcal{X}^{n},
\end{equation}
and two decoders,
\begin{equation}
g_{1}:\mathcal{Y}_{1}^{n}\rightarrow \{1,2,...,2^{nR_{1}}\}
\end{equation}
and
\begin{equation}
g_{2}:\mathcal{Y}_{2}^{n}\rightarrow \{1,2,...,2^{nR_{2}}\}.
\end{equation}
The average probability of error is defined as the probability that
the decoded messages are not equal to the transmitted messages; that
is,
\begin{equation}
P_{e}^{(n)}=P(g_{1}(Y_{1}^{n})\neq W_{1}\cup g_{2}(Y_{2}^{n})\neq
W_{2}).
\end{equation}

The knowledge that the eavesdropper gets about $W_{1}$ and $W_{2}$
from its received signal $Z^{n}$ is modeled as
\begin{IEEEeqnarray}{lr}
I(Z^{n},W_{1})=H(W_{1})-H(W_{1}|Z^{n}),\\
I(Z^{n},W_{2})=H(W_{2})-H(W_{2}|Z^{n}),
\end{IEEEeqnarray}
and
\begin{equation}
I(Z^{n},(W_{1},W_{2}))=H(W_{1},W_{2})-H(W_{1},W_{2}|Z^{n}).
\end{equation}
Perfect secrecy revolves around the idea that the eavesdropper
cannot get even a single bit information about the transmitted
messages. Perfect secrecy thus requires that
\begin{IEEEeqnarray}{lr}
I(Z^{n},W_{1})=0\Leftrightarrow H(W_{1})=H(W_{1}|Z^{n}),\\
\nonumber I(Z^{n},W_{2})=0\Leftrightarrow H(W_{2})=H(W_{2}|Z^{n}),
\end{IEEEeqnarray}
and
\begin{equation}
I(Z^{n},(W_{1},W_{2}))=0\Leftrightarrow
H(W_{1},W_{2})=H(W_{1},W_{2}|Z^{n}).
\end{equation}
 The secrecy levels of confidential messages $W_{1}$ and
$W_{2}$ are measured at the eavesdropper in terms of equivocation
rates which are defined as follows.
\begin{defi}
The equivocation rates $R_{e1}$, $R_{e2}$ and $R_{e12}$
 for the
Broadcast channel with confidential messages are:
\begin{IEEEeqnarray}{lr}
R_{e1}=\frac{1}{n}H(W_{1}|Z^{n}),\\
\nonumber R_{e2}=\frac{1}{n}H(W_{2}|Z^{n}), \\ \nonumber
R_{e12}=\frac{1}{n}H(W_{1},W_{2}|Z^{n}).
\end{IEEEeqnarray}
\end{defi}
The perfect secrecy rates $R_{1}$ and $R_{2}$ are the amount of
information that can be sent to the legitimate receivers not only
reliably but also confidentially.
\begin{defi}
A secrecy rate pair $(R_{1},R_{2})$ is said to be achievable if for
any $\epsilon>0$, there exists  a sequence of
$((2^{nR_{1}},2^{nR_{2}}),n)$ codes, such that for sufficiently
large $n$, we have:
\begin{IEEEeqnarray}{rl}
\label{l0}P_{e}^{(n)}&\leq \epsilon,\\
\label{l1}
 R_{e1}&\geq R_{1}-\epsilon_{1},\\
\label{l2} R_{e2}&\geq R_{2}-\epsilon_{2},\\
\label{l3}
 R_{e12}&\geq R_{1}+R_{2}-\epsilon_{3}.
%\frac{1}{n}H(W_{1},W_{2}|Z) \geq \frac{1}{n}H(W_{1},W_{2})-\epsilon
\end{IEEEeqnarray}
\end{defi}
In the above definition, the first condition concerns the
reliability, while the other conditions guarantee perfect secrecy for each individual message and both messages as well.
%\begin{rem}
%Note that if both messages satisfy perfect secrecy constraint of
%(\ref{l3}), then both constraints of (\ref{l1}) or (\ref{l2}) will
%be satisfied. Let us assume (\ref{l3}) is satisfied. Then for
%message $W_{1}$ we have
%\begin{IEEEeqnarray}{lr}\nonumber
%\frac{1}{n}H(W_{1},W_{2}|Z) \geq \frac{1}{n}H(W_{1},W_{2})-\epsilon_{3},\\
%\nonumber \frac{1}{n}H(W_{1}|Z) +\frac{1}{n}H(W_{2}|W_{1},Z) \geq
%\frac{1}{n}H(W_{1})+\frac{1}{n}H(W_{2}|W_{1})-\epsilon_{3},\\
%\nonumber  \frac{1}{n}H(W_{1}|Z) \geq
%\frac{1}{n}H(W_{1})+\frac{1}{n}H(W_{2}|W_{1})-\frac{1}{n}H(W_{2}|W_{1},Z)-\epsilon_{3},\\
%\nonumber
% \frac{1}{n}H(W_{1}|Z) \geq \frac{1}{n}H(W_{1})-\epsilon_{3}.
%\end{IEEEeqnarray}
%where we used the fact that
%$\frac{1}{n}H(W_{2}|W_{1})-\frac{1}{n}H(W_{2}|W_{1},Z)\geq 0$. The
%same argument can be used for $H(W_{2}|Z)$.
%\end{rem}
 The capacity region is defined as follows.
\begin{defi}
The capacity region of the broadcast channel with confidential
messages is the closure of the set of all achievable rate pairs
$(R_{1},R_{2})$.
\end{defi}

\section{General BCCs}
In this section, we consider the general broadcast channel with
confidential messages and present an achievable rate region. Our
achievable coding scheme is based on a combination of the random
binning and the Gelfand-Pinsker bining schemes \cite{32}. The
following theorem illustrates the achievable rate region for this
channel.
%\subsection{Inner Bound for General BCCs}
\begin{thm}
Let $\mathbb{R}_{I}$ denote the union of all non-negative rate pairs
$(R_{1},R_{2})$ satisfying
 \begin{IEEEeqnarray}{rl}
    R_{1}&\leq I(V_{1};Y_{1})-I(V_{1};Z), \\
    R_{2}&\leq I(V_{2};Y_{2})-I(V_{2};Z), \\
    R_{1}+R_{2}&\leq
    I(V_{1};Y_{1})+I(V_{2};Y_{2})-I(V_{1},V_{2};Z)-I(V_{1};V_{2}).
  \end{IEEEeqnarray}
over all joint distributions
$P(v_{1},v_{2})P(x|v_{1},v_{2})P(y_{1},y_{2},z|x)$. Then any rate
pair $(R_{1},R_{2})\in \mathbb{R}_{I}$ is achievable for the
broadcast channel with confidential messages.
\end{thm}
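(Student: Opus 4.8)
The plan is to prove achievability by a random coding argument that superposes Gelfand--Pinsker binning on top of Marton-style coding, then adds an extra layer of random binning to wash out the eavesdropper's information. First I would fix an input distribution $P(v_{1},v_{2})P(x|v_{1},v_{2})$ and, for each legitimate receiver $k\in\{1,2\}$, generate a two-index codebook of $V_{k}$-sequences: the first index runs over $2^{n(R_{k}+\tilde R_{k})}$ ``bin'' labels (of which $2^{nR_{k}}$ are genuine message indices and the remaining $2^{n\tilde R_{k}}$ are confusion indices used to confuse the eavesdropper), and I would size the codebooks so that the Gelfand--Pinsker/Marton covering constraints are met, namely that for any pair of message/confusion indices we can find a jointly typical pair $(v_{1}^{n},v_{2}^{n})$; this is exactly where the $I(V_{1};V_{2})$ penalty enters and is handled by the mutual covering lemma. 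The transmitted codeword $x^{n}$ is then drawn i.i.d.\ through $P(x|v_{1},v_{2})$ from the selected typical pair.

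Next I would verify reliability. Receiver $k$ looks for the unique $v_{k}^{n}$ jointly typical with $y_{k}^{n}$; by the packing lemma the error probability is small provided the \emph{total} rate of codebook $k$ satisfies $R_{k}+\tilde R_{k}\le I(V_{k};Y_{k})-\delta$ (together with the covering slack). Combined with the covering requirement $R_{1}+\tilde R_{1}+R_{2}+\tilde R_{2}\ge I(V_{1};V_{2})$-type conditions, a Fourier--Motzkin elimination of the auxiliary rates $\tilde R_{1},\tilde R_{2}$ against the secrecy requirements below will yield precisely the three inequalities in the theorem. For the equivocation analysis I would choose the confusion rates as $\tilde R_{k}= I(V_{k};Z)-\delta$ and show that, conditioned on the true message, the eavesdropper's list of compatible codewords is large enough that $H(W_{1},W_{2}\mid Z^{n})\ge n(R_{1}+R_{2})-n\epsilon$; the standard tool here is to bound $H(W_1,W_2|Z^n)\ge H(W_1,W_2|Z^n,V_1^n,V_2^n)+I(W_1,W_2;V_1^n,V_2^n|Z^n)$ and then argue, using a list-decoding/Fano argument on the eavesdropper's channel, that given $Z^{n}$ and the messages the eavesdropper can essentially pin down $(V_{1}^{n},V_{2}^{n})$ so long as $\tilde R_{1}+\tilde R_{2}\le I(V_{1},V_{2};Z)$, which is consistent with the per-user choice above up to the $I(V_1;V_2)$ overlap. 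The individual-equivocation conditions \eqref{l1},\eqref{l2} follow from the same computation restricted to one coordinate, or from the joint bound together with a rate-transfer argument.

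The main obstacle I expect is the equivocation computation in the presence of the Marton overlap: because $V_{1}^{n}$ and $V_{2}^{n}$ are correlated (they are chosen jointly to be typical), the eavesdropper's uncertainty about the pair is not simply the sum of the two per-user uncertainties, and one must carefully account for the $I(V_1;V_2)$ term so that the confusion indices genuinely inject $n(I(V_1,V_2;Z)-I(V_1;V_2))$ bits of randomness rather than $n\,I(V_1,V_2;Z)$. Getting this bookkeeping right --- i.e.\ showing that the chosen $\tilde R_1,\tilde R_2$ simultaneously satisfy the covering constraint, keep reliability, and achieve full equivocation --- and then performing the Fourier--Motzkin elimination cleanly to recover exactly the stated region is the technical heart of the proof. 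The remaining steps (typicality, packing, covering, Fano) are routine once the rate allocation is pinned down.
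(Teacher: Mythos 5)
Your proposal follows essentially the same route as the paper: Marton-style joint typicality encoding of $(V_1^n,V_2^n)$ with an extra layer of random binning at confusion rates $\tilde R_k \approx I(V_k;Z)$, reliability via packing at each legitimate receiver, and the joint equivocation bound obtained by expanding $H(W_1,W_2|Z^n)$ through $(V_1^n,V_2^n)$ and invoking Fano's inequality for an eavesdropper that knows the bin indices --- your decomposition $H(W_1,W_2|Z^n)\ge I(W_1,W_2;V_1^n,V_2^n|Z^n)$ is algebraically the same chain the paper writes out, including the $I(V_1^n;V_2^n)$ correction for the Marton overlap. The ``main obstacle'' you identify (bookkeeping of the $I(V_1;V_2)$ term against the covering and confusion rates) is precisely the delicate point the paper's own equivocation calculation addresses, so the plan is sound and matches the published argument.
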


\begin{rem}
If we remove the secrecy constraints by setting
$\mathcal{Z}=\emptyset$, then the above rate region reduces to
Marton's achievable region for the general broadcast channel.
\end{rem}
\begin{rem}
If we remove one of the users by setting e.g.,
$\mathcal{Y}_{2}=\emptyset$, then we get the Csiszar and Korner's
secrecy capacity for the other user.
\end{rem}

\begin{proof}

1) \textit{Codebook Generation}:
 The structure of the encoder is
depicted in Fig.\ref{f2}.
\begin{figure}
\centerline{\includegraphics[scale=.6]{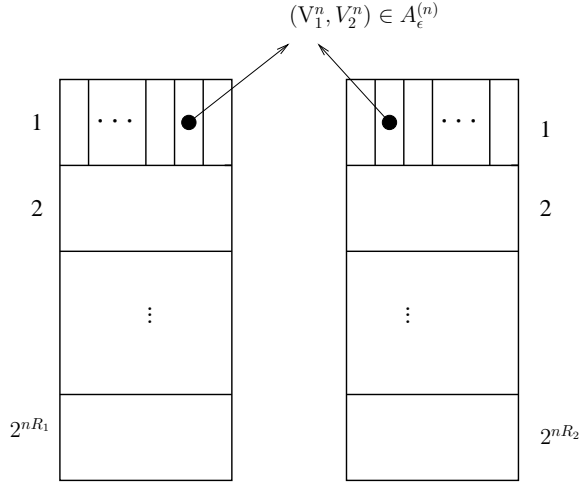}} \caption{The
Stochastic Encoder} \label{f2}
\end{figure}
Fix $P(v_{1})$, $P(v_{2})$ and  $P(x|v_{1},v_{2})$. The stochastic
encoder generates $2^{n(I(V_{1};Y_{1})-\epsilon)}$ independent and
identically distributed sequences $v_{1}^{n}$ according to the
distribution $P(v_{1}^{n})=\prod_{i=1}^{n}P(v_{1,i})$. Next,
randomly distribute these sequences into $2^{nR_{1}}$ bins such that
each bin contains $2^{n(I(V_{1};Z)-\epsilon)}$ codewords. Similarly,
it generates $2^{n(I(V_{2};Y_{2})-\epsilon)}$ independent and
identically distributed sequences $v_{2}^{n}$ according to the
distribution $P(v_{2}^{n})=\prod_{i=1}^{n}P(v_{2,i})$. Next,
randomly distribute these sequences into $2^{nR_{2}}$ bins such that
each bin contains $2^{n(I(V_{2};Z)-\epsilon)}$ codewords. Index each
of the above bins by $w_{1}\in\{1,2,...,2^{nR_{1}}\}$ and
$w_{2}\in\{1,2,...,2^{nR_{2}}\}$ respectively.

2) \textit{Encoding}: To send messages $w_{1}$ and $w_{2}$, the
transmitter looks for $v_{1}^{n}$ in bin $w_{1}$ of the first bin
set and looks for $v_{2}^{n}$ in bin $w_{2}$ of the second bin set,
such that $(v_{1}^{n},v_{2}^{n})\in
A_{\epsilon}^{(n)}(P_{V_{1},V_{2}})$ where
$A_{\epsilon}^{(n)}(P_{V_{1},V_{2}})$ denotes the set of jointly
typical sequences $v_{1}^{n}$ and $v_{2}^{n}$ with respect to
$P(v_{1},v_{2})$. The rates are such that there exist more than one
joint typical pair, the transmitter randomly chooses one of them and
then generates $x^{n}$ according to
$P(x^{n}|v_{1}^{n},v_{2}^{n})=\prod_{i=1}^{n}P(x_{i}|v_{1,i},v_{2,i})$.
This scheme is equivalent to the scenario in which each bin is
divided into subbins and the transmitter randomly chooses one of the
subbins of bin $w_{1}$ and one of the subbins of bin $w_{2}$. It
then looks for a joint typical sequence $(v_{1}^{n},v_{2}^{n})$ in
the corresponding subbins and generates $x^{n}$.

3) \textit{Decoding}: The received signals at the legitimate
receivers, $y_{1}^{n}$ and $y_{2}^{n}$, are the outputs of the
channels $P(y_{1}^{n}|x^{n})=\prod_{i=1}^{n}P(y_{1,i}|x_{i})$ and
$P(y_{2}^{n}|x^{n})=\prod_{i=1}^{n}P(y_{2,i}|x_{i})$, respectively.
The first receiver looks for the unique sequence $v_{1}^{n}$ such
that $(v_{1}^{n},y_{1}^{n})$ is jointly typical and declares the
index of the bin containing $v_{1}^{n}$ as the message received. The
second receiver uses the same method to extract the message $w_{2}$.

4) \textit{Error Probability Analysis}: Since the region of
(\ref{l0}) is a subset of Marton's region then, error probability
analysis is the same as \cite{33}.

5) \textit{Equivocation Calculation}: The proof of secrecy
requirement for each individual message (\ref{l1}) and (\ref{l2}) is straightforward and may therefore be omitted.

To prove the requirement of (\ref{l3}) consider $H(W_{1},W_{2}|Z^{n})$, we have
\begin{eqnarray}\nonumber \label{l4}
nR_{e12}&=&H(W_{1},W_{2}|Z^{n})\\ \nonumber  &\geq& H(W_{1},W_{2},Z^{n})-H(Z^{n})\\
\nonumber &=&
H(W_{1},W_{2},V_{1^{n}},V_{2}^{n},Z^{n})-H(V_{1}^{n},V_{2}^{n}|W_{1},W_{2},Z^{n})-H(Z^{n})\\
\nonumber &=&H(W_{1},W_{2},V_{1}^{n},V_{2}^{n})+ H(Z^{n}|W_{1},W_{2},V_{1}^{n},V_{2}^{n})-H(V_{1}^{n},V_{2}^{n}|W_{1},W_{2},Z^{n})-H(Z^{n})\\
\nonumber &\stackrel{(a)}{\geq}&
H(W_{1},W_{2},V_{1}^{n},V_{2}^{n})+H(Z^{n}|W_{1},W_{2},V_{1}^{n},V_{2}^{n})-n\epsilon_{n}-H(Z^{n})\\
\nonumber &\stackrel{(b)}{=}&H(W_{1},W_{2},V_{1}^{n},V_{2}^{n})+H(Z|V_{1}^{n},V_{2}^{n})-n\epsilon_{n}-H(Z^{n}) \\
\nonumber&\stackrel{(c)}{\geq}&
H(V_{1}^{n},V_{2}^{n})+H(Z^{n}|V_{1}^{n},V_{2}^{n}) -n\epsilon_{n}-H(Z^{n}) \\
\nonumber &\stackrel{(d)}{=}&
H(V_{1}^{n})+H(V_{2}^{n})-I(V_{1}^{n};V_{2}^{n})-I(V_{1}^{n},V_{2}^{n};Z^{n})-
n\epsilon_{n}\\
\nonumber &\stackrel{(e)}{=}&
I(V_{1}^{n};Y_{1}^{n})+I(V_{2}^{n};Y_{2}^{n})-I(V_{1}^{n};V_{2}^{n})-I(V_{1}^{n},V_{2}^{n};Z^{n})- n\epsilon_{n}\\
\nonumber &\geq& nR_{1}+nR_{2}-n\epsilon_{n},
\end{eqnarray}

where $(a)$ follows from Fano's inequality which states that for
sufficiently large $n$ we have
$H(V_{1}^{n},V_{2}^{n}|W_{1},W_{2},Z^{n})$ $\leq h(P_{we}^{(n)})$
$+nP_{we}^{n}I(V_{1},V_{2};Z)\leq n\epsilon_{n}$. Here $P_{we}^{n}$
denotes the wiretapper's error probability of decoding
$(v_{1}^{n},v_{2}^{n})$ in the case that the bin numbers $w_{1}$ and
$w_{2}$ are known to the eavesdropper. Since the sum rate is less
than $I(V_{1},V_{2};Z)$, then $P_{we}^{n}\rightarrow 0$ for
sufficiently large $n$. $(b)$ follows from the following Markov
chain: $(W_{1},W_{2})\rightarrow (V_{1},V_{2})\rightarrow$ $ Z$.
Hence, we have
$H(Z^{n}|W_{1},W_{2},V_{1}^{n},V_{2}^{n})=H(Z^{n}|V_{1}^{n},V_{2}^{n})$.
$(c)$ follows from the fact that
$H(W_{1},W_{2},V_{1}^{n},V_{2}^{n})\geq H(V_{1}^{n},V_{2}^{n})$.
$(d)$ follows from that fact that
$H(V_{1}^{n})=I(V_{1}^{n};Y_{1}^{n})$ and
$H(V_{2}^{n})=I(V_{2}^{n};Y_{2}^{n})$.

%Following the same approach, the security conditions for the other
%corner points maybe be proven. Furthermore, by time sharing, we can
%achieve the security conditions for the rate region of (\ref{l0}).
\end{proof}

\section{The secrecy capacity Region of the Degraded BCCs}
In this section, we consider the degraded broadcast channel with
confidential messages and establish its secrecy capacity region.
\begin{defi}
A broadcast channel with confidential messages is said to be
physically degraded, if $X\rightarrow Y_{1} \rightarrow
Y_{2}\rightarrow Z$ forms a Markov chain. In the other words, we
have
\begin{equation}
P(y_{1},y_{2},z|x)=P(y_{1}|x)P(y_{2}|y_{1})P(z|y_{2}).
\end{equation}
\end{defi}
\begin{defi}
A broadcast channel with confidential messages is said to be
stochastically degraded if its conditional marginal distributions
are the same as that of a physically degraded broadcast channel,
i.e., if there exist two distributions $P^{'}(y_{2}|y_{1})$ and
$P^{'}(z|y_{2})$ such that
\begin{eqnarray}
P(y_{2}|x)=\sum_{y_{1}}P(y_{1}|x)P^{'}(y_{2}|y_{1})\\ \nonumber
P(z|x)=\sum_{y_{2}}P(y_{2}|x)P^{'}(z|y_{2})
\end{eqnarray}
\end{defi}
\begin{lem}
The secrecy capacity region of a broadcast channel with confidential
messages depends only on the conditional marginal distributions
$P(y_{1}|x)$, $P(y_{2}|x)$ and $P(z|x)$.
\end{lem}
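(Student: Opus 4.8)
The plan is to observe that, for a fixed code, every quantity entering the definition of achievability --- the reliability condition (\ref{l0}) and the three secrecy conditions (\ref{l1})--(\ref{l3}) --- is determined by the marginal transition laws $P(y_1|x)$, $P(y_2|x)$, $P(z|x)$ alone, and never by the way the three outputs are correlated given $X^n$. Hence if two broadcast channels with confidential messages share these three marginals, the same sequences of $((2^{nR_1},2^{nR_2}),n)$ codes are good for both, so the two channels have the same set of achievable secrecy rate pairs, and therefore --- upon taking closures --- the same secrecy capacity region. In particular this justifies replacing a stochastically degraded channel by the physically degraded channel with the same marginals.

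First I would fix an arbitrary sequence of codes (a stochastic encoder $f$ together with decoders $g_1,g_2$) and note that the joint law of $(W_1,W_2,X^n)$ does not depend on the channel at all: the messages are uniform, and $X^n$ is produced from $(W_1,W_2)$ through $f$ and the encoder's internal randomness only. For the equivocation terms this already settles matters, since $H(W_1|Z^n)$, $H(W_2|Z^n)$ and $H(W_1,W_2|Z^n)$ are functionals of the joint distribution of $(W_1,W_2,Z^n)$, which factors as $P(w_1,w_2)\,P(x^n\,|\,w_1,w_2)\,\prod_{i=1}^{n}P(z_i|x_i)$; the last factor involves only the marginal $P(z|x)$. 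Thus $R_{e1},R_{e2},R_{e12}$ take identical values on any two channels with the same $P(z|x)$, so (\ref{l1})--(\ref{l3}) hold for one iff they hold for the other.

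The one place needing a small extra step is reliability, because $P_{e}^{(n)}=P(g_1(Y_1^n)\neq W_1\cup g_2(Y_2^n)\neq W_2)$ does involve the \emph{joint} conditional law of $(Y_1^n,Y_2^n)$ given $X^n$ and so is not, in general, a function of the marginals alone. The remedy is the elementary sandwich
\begin{equation}
\max\{P_1^{(n)},\,P_2^{(n)}\}\ \leq\ P_{e}^{(n)}\ \leq\ P_1^{(n)}+P_2^{(n)},
\end{equation}
with $P_1^{(n)}=P(g_1(Y_1^n)\neq W_1)$ and $P_2^{(n)}=P(g_2(Y_2^n)\neq W_2)$. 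Since $g_1$ reads only $Y_1^n$ and $g_2$ reads only $Y_2^n$, the marginal error probability $P_1^{(n)}$ depends on the channel only through $P(y_1|x)$ and $P_2^{(n)}$ only through $P(y_2|x)$; hence $P_{e}^{(n)}\to 0$ on one channel holds if and only if $P_1^{(n)}\to 0$ and $P_2^{(n)}\to 0$, if and only if $P_{e}^{(n)}\to 0$ on the other, so condition (\ref{l0}) is preserved as well.

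Combining the two observations, a rate pair $(R_1,R_2)$ is achievable on a broadcast channel with confidential messages if and only if it is achievable on any other such channel with the same $P(y_1|x)$, $P(y_2|x)$, $P(z|x)$; taking closures gives the asserted equality of secrecy capacity regions. I do not anticipate a genuine obstacle. The only subtlety worth stating explicitly is the one just made --- $P_{e}^{(n)}$ itself is not a marginal functional, but the achievability \emph{requirement} $P_{e}^{(n)}\to 0$ is, via the union bound --- together with the routine bookkeeping that the encoder's local randomness be kept independent of the channel noise, so that marginalizing out the unused outputs leaves the relevant joint distributions intact.
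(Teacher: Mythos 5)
Your proof is correct and is precisely the standard argument the paper alludes to when it omits the proof with a reference to Cover and Thomas: the decoders and the equivocation functionals see only the marginals, and the joint-dependent quantity $P_{e}^{(n)}$ is handled by sandwiching it between the individual error probabilities via the union bound. Your extension of that argument to cover the secrecy conditions (\ref{l1})--(\ref{l3}), by noting that the law of $(W_{1},W_{2},Z^{n})$ is fixed by $P(z|x)$ alone, is exactly the right additional observation and introduces no gap.
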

\begin{proof}
The proof is very similar to \cite{34} and may therefore be omitted
here.
\end{proof}
In the following theorem, we fully characterize the capacity region
of the physically degraded broadcast channel with confidential
messages.
\begin{thm}
The capacity region for transmitting independent secret information
over the degraded broadcast channel is the convex hull of the
closure of all $(R_{1},R_{2})$ satisfying
\begin{IEEEeqnarray}{rl}\label{l7}
    R_{1}&\leq I(X;Y_{1}|U)+I(U;Z)-I(X;Z), \\
    R_{2}&\leq I(U;Y_{2})-I(U;Z).
\end{IEEEeqnarray}
for some joint distribution $P(u)P(x|u)P(y_{1}, y_{2},z|x)$.
\end{thm}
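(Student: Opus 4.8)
The plan is to prove achievability and a matching converse. For \emph{achievability}, I would use the Secret Superposition Coding scheme advertised in the abstract: generate $2^{nR_2'}$ cloud-center sequences $u^n$ i.i.d.\ according to $P(u)$, partition them into $2^{nR_2}$ bins of size $2^{n I(U;Z)}$, so that the bin index carries $w_2$ and the within-bin randomization blinds the eavesdropper about $U^n$; then, for each $u^n$, generate $2^{nR_1'}$ satellite sequences $x^n$ i.i.d.\ according to $P(x|u)$ and partition them into $2^{nR_1}$ bins of size $2^{n(I(X;Z|U))}$ (roughly), using the bin index to carry $w_1$. Receiver $2$ decodes $u^n$ (hence $w_2$) provided $R_2' \le I(U;Y_2)$; receiver $1$, being stronger, decodes both $u^n$ and $x^n$ provided $R_2' \le I(U;Y_1)$ and $R_1' \le I(X;Y_1|U)$, and recovers $w_1$ from the $x^n$-bin index. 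Setting $R_2' = I(U;Y_2)$ and using degradedness ($I(U;Y_1)\ge I(U;Y_2)$) to ensure receiver $1$ also decodes $u^n$, the rate bounds $R_2 = R_2' - I(U;Z) = I(U;Y_2) - I(U;Z)$ and $R_1 = R_1' - (I(X;Z|U)) = I(X;Y_1|U) - I(X;Z|U)$ follow; rewriting $I(X;Z|U) = I(X;Z) - I(U;Z)$ (valid because $U\to X\to Z$) turns the first bound into $R_1 \le I(X;Y_1|U) + I(U;Z) - I(X;Z)$, exactly \eqref{l7}. The key novelty to highlight — as the abstract stresses — is that the randomization used to protect $w_2$ in the first layer, of size $2^{nI(U;Z)}$, is precisely what boosts the secrecy rate of the second (inner) layer, since the eavesdropper's uncertainty about $U^n$ helps mask $X^n$.

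For the \emph{equivocation analysis}, I would verify \eqref{l1}--\eqref{l3} by the same Fano-plus-chain-rule manipulation used in the proof of Theorem~1: expand $H(W_1,W_2|Z^n)$, insert $U^n$ and $X^n$, use the Markov chain $(W_1,W_2)\to(U^n,X^n)\to Z^n$, and invoke a wiretapper-error Fano bound that holds because the total randomization rate exceeds $I(X;Z)$ (equivalently, the eavesdropper, even knowing $w_1,w_2$, can list-decode $(u^n,x^n)$). The individual constraints \eqref{l1} and \eqref{l2} follow \emph{a fortiori}, just as in Theorem~1, and a standard convexity/time-sharing argument gives the convex hull.

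The substantive part is the \emph{converse}: given any sequence of codes with $P_e^{(n)}\to 0$ and $R_{e2}\ge R_2-\epsilon$, I must exhibit an auxiliary $U$ with the stated structure. The approach combines the classical converse for the degraded broadcast channel with Csisz\'ar's sum identity (the ``Csiszar lemma'' named in the abstract). First, for $R_2$: start from $nR_2 \le H(W_2|Z^n) + n\epsilon_n$ (secrecy) $\le I(W_2;Y_2^n) - I(W_2;Z^n) + n\epsilon_n'$ (reliability via Fano, then subtract), and apply Csisz\'ar's identity $\sum_i \big[ I(W_2;Y_{2,i}|Y_2^{i-1}Z_{i+1}^n) - I(W_2;Z_i|Y_2^{i-1}Z_{i+1}^n)\big] = I(W_2;Y_2^n) - I(W_2;Z^n)$ to single-letterize, defining $U_i = (W_2, Y_2^{i-1}, Z_{i+1}^n)$. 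For $R_1$: use $nR_1 \le I(W_1;Y_1^n|W_2) + n\epsilon_n$ (Fano; conditioning on $W_2$ is free since messages are independent), split via the chain rule into $I(X^n;Y_1^n|W_2)$-type terms minus an $I(U;Z)$ correction that must be produced to match the secrecy-aided form — this is where the degradedness $X\to Y_1\to Y_2\to Z$ is essential, so that $Z_i$ is a degraded version of $Y_{1,i}$ given $U_i$, letting me replace per-letter $Z$ terms and recombine. Introducing the standard time-sharing variable $Q$ (uniform on $\{1,\dots,n\}$) and absorbing it, I set $U=(U_Q,Q)$, $X=X_Q$, and check the Markov chain $U\to X\to(Y_1,Y_2,Z)$ and the cardinality bound on $U$ via the usual support-lemma argument. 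The main obstacle I anticipate is getting the two single-letter bounds to use a \emph{common} auxiliary $U$ while the $I(U;Z)$ terms cancel correctly: the $R_1$ chain naturally wants $U_i=(W_2,Y_1^{i-1})$ or $(W_2,Y_2^{i-1})$, whereas Csisz\'ar's identity for $R_2$ forces the mixed past/future $U_i=(W_2,Y_2^{i-1},Z_{i+1}^n)$; reconciling these — showing the extra $Z_{i+1}^n$ does not hurt the $R_1$ bound because of degradedness, so that one choice of $U_i$ works for both — is the crux, and it is exactly the step the abstract flags as ``a combination of the converse proof of the conventional degraded broadcast channel and Csiszar lemma.''
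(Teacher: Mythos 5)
Your achievability and equivocation plans coincide with the paper's: the same two-layer superposition codebook, the same bin sizes ($2^{nI(U;Z)}$ for the cloud centers, $2^{n(I(X;Z)-I(U;Z))}=2^{nI(X;Z|U)}$ for the satellites), the same identity $I(X;Z|U)=I(X;Z)-I(U;Z)$ to rewrite the $R_1$ bound, and the same Fano-based equivocation computation. No issues there.

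The converse, however, is where your proposal stops short. You correctly assemble the ingredients (secrecy constraint, Fano, Csisz\'ar's sum identity, degradedness) and you correctly locate the crux --- that the $R_2$ bound via the sum identity wants an auxiliary involving $Z_{i+1}^{n}$ and the receiver-2 past, while the $R_1$ bound wants a different conditioning, and the two must be reconciled into a single $U$ --- but you do not supply the idea that resolves it; you only name it as the step to be done. The paper's resolution is concrete and twofold. First, the common auxiliary is $U_i=(W_2,Y_1^{i-1})$, i.e.\ the \emph{strong} receiver's past is used even in the bound for receiver 2: after the sum-identity step one has $H(Y_{2,i}\mid Z_i,\widetilde{Z}^{i+1})-H(Y_{2,i}\mid W_2,Y_2^{i-1},Z_i,\widetilde{Z}^{i+1})$, and conditioning is added ($Y_1^{i-1}$ into the second term) and then $Y_2^{i-1}$ is removed using the Markov chain $Y_2^{i-1}\rightarrow (W_2,\widetilde{Z}^{i+1},Y_1^{i-1})\rightarrow Y_{2,i}\rightarrow Z_i$, which is exactly where degradedness enters. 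Your candidate $U_i=(W_2,Y_2^{i-1},Z_{i+1}^{n})$ would not mesh with the $R_1$ chain. Second, $\widetilde{Z}^{i+1}$ is \emph{not} folded into $U_i$ at all: it is carried along as a separate conditioning variable, packaged at the end as $V=(\widetilde{Z}^{Q+1},Q)$, yielding bounds of the form $I(X;Y_1\mid U,V)+I(U;Z\mid V)-I(X;Z\mid V)$ and $I(U;Y_2\mid V)-I(U;Z\mid V)$, and then eliminated by observing that conditional mutual informations are averages of unconditional ones, so the region is maximized at constant $V$. Without these two devices your single-letterization does not close, so as written the converse is a plan with a hole exactly at the step the theorem needs.
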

\begin{rem}
If we remove the secrecy constraints by setting
$\mathcal{Z}=\emptyset$, then the above theorem reduces to the
capacity region of the degraded broadcast channel.
\end{rem}
\begin{proof}

\textit{Achievability}: The coding scheme is based on Cover's
superposition coding and the random bining. We refer to this scheme
as Secure Superposition Coding scheme. The available resources at
the encoder are used for two purposes: to confuse the eavesdropper
so that perfect secrecy can be achieved for both layers, and to
transmit the messages in the main channels. To satisfy
confidentiality, the randomization used in the first layer is again
used in the second layer. This makes a shift of $I(U;Z)$ in the
bound of $R_{1}$. The formal proof of the achievability is as
follows:

1) \textit{Codebook Generation}: The structure of the encoder is
depicted in Fig.\ref{f3}.
\begin{figure}
\centerline{\includegraphics[scale=.6]{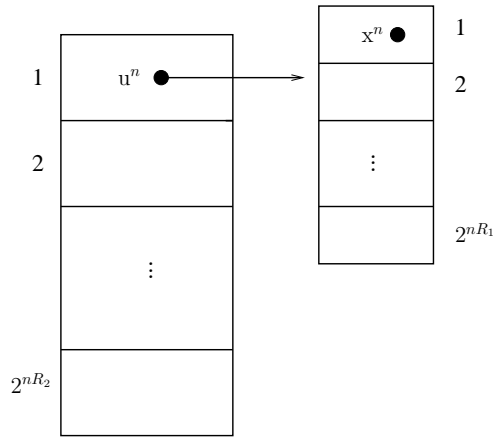}} \caption{Secret
Superposition structure}  \label{f3}
\end{figure}
Let us fix $P(u)$ and $P(x|u)$. The stochastic encoder generates
$2^{n(I(U;Y_{2})-\epsilon)}$ independent and identically distributed
sequences $u^{n}$ according to the distribution
$P(u^{n})=\prod_{i=1}^{n}P(u_{i})$. Next, we randomly distribute
these sequences into $2^{nR_{2}}$ bins such that each bin contains
$2^{n(I(U;Z)-\epsilon)}$ codewords. We index each of the above bins
by $w_{2}\in\{1,2,...,2^{nR_{2}}\}$. For each codeword of $u^{n}$,
it also generates $2^{n(I(X;Y_{1}|U)-\epsilon)}$ independent and
identically distributed sequences $x^{n}$ according to the
distribution $P(x^{n}|u^{n})=\prod_{i=1}^{n}P(x_{i}|u_{i})$. We
randomly distribute these sequences into $2^{nR_{1}}$ bins such that
each bin contains $2^{n(I(X;Z)-I(U;Z)-\epsilon)}$ codewords. We
index each of the above bins by $w_{1}\in\{1,2,...,2^{nR_{1}}\}$.

2) \textit{Encoding}: To send messages $w_{1}$ and $w_{2}$, the
transmitter randomly chooses one of the codewords in bin $w_{2}$,
say $u^{n}$. Then given $u^{n}$, the transmitter randomly chooses
one of $x^{n}$ in bin $w_{1}$ of the second layer and sends it.

3) \textit{Decoding}: The received signal at the legitimate
receivers, $y_{1}^{n}$ and $y_{2}^{n}$, are the outputs of the
channels $P(y_{1}^{n}|x^{n})=\prod_{i=1}^{n}P(y_{1,i}|x_{i})$ and
$P(y_{2}^{n}|x^{n})=\prod_{i=1}^{n}P(y_{2,i}|x_{i})$, respectively.
Receiver $2$ determines the unique $u^{n}$ such that
$(u^{n},y_{2}^{n})$ are jointly typical and declares the index of
the bin containing $u^{n}$ as the message received. If there is none
of such or more than of one such, an error is declared. Receiver $1$
looks for the unique $(u^{n},x^{n})$ such that
$(u^{n},x^{n},y_{1}^{n})$ are jointly typical and declares the
indexes of the bins containing $u^{n}$ and $x^{n}$ as the messages
received. If there is none of such or more than of  one such, an
error is declared.

4) \textit{Error Probability Analysis}: Since each rate pair of
(\ref{l7}) is in the capacity region of the degraded broadcast
channel without confidentiality constraint, then it can be readily
shown that the error probability is arbitrarily small, c.f.
\cite{34}.

5) \textit{Equivocation Calculation}: To prove the secrecy
requirement of (\ref{l1}), we have
\begin{eqnarray}\nonumber
nR_{e1}&=&H(W_{1}|Z^{n})\\ \nonumber
&\geq& H(W_{1}|Z^{n},U^{n})\\ \nonumber &=&
H(W_{1},Z^{n}|U^{n})-H(Z^{n}|U^{n}) \\ \nonumber &=&
H(W_{1},X^{n},Z^{n}|U^{n})-H(Z^{n}|U^{n})-H(X^{n}|W_{1},Z^{n},U^{n})\\
\nonumber
&\stackrel{(a)}{=}&H(W_{1},X^{n}|U^{n})+H(Z^{n}|W_{1},U^{n},X^{n})-H(Z^{n}|U^{n})-n\epsilon_{n}\\
\nonumber &\stackrel{(b)}{\geq}&
H(X^{n}|U^{n})+H(Z^{n}|X^{n})-H(Z^{n}|U^{n})-n\epsilon_{n}\\
\nonumber &\stackrel{(c)}{=}&
H(X^{n};Y_{1}^{n}|U^{n})+I(U^{n};Z^{n})-I(X^{n};Z^{n})-n\epsilon_{n}\\
\nonumber &\geq & nR_{1}-n\epsilon_{n},
\end{eqnarray}
where $(a)$ follows from Fano's inequality which states that
$H(X^{n}|W_{1},Z^{n},U^{n})\leq
h(P_{we}^{(n)})+nP_{we}^{n}I(X;Z)\leq n\epsilon_{n}$ for
sufficiently large $n$. Here $P_{we}^{n}$ denotes the wiretapper's
error probability of decoding $x^{n}$ given the bin number and the
codeword $u^{n}$ are known to the eavesdropper. Since the rate is
less than $I(X;Z)$, then $P_{we}^{n}\rightarrow 0$ for sufficiently
large $n$. $(b)$ follows from the  fact that $(W_{1},U)\rightarrow X
\rightarrow Z$ forms a Markov chain. Thus we have
$I(W_{1},U^{n};Z^{n}|X^{n})=0$, where it is implied that
$H(Z^{n}|W_{1},U^{n},X^{n})=H(Z^{n}|X^{n})$. $(c)$ follows from two
identities: $H(X^{n}|U^{n})=I(X^{n};Y_{1}^{n}|U^{n})$ and
$H(Z^{n}|X^{n})-H(Z^{n}|U^{n})=I(U^{n};Z^{n})-I(X^{n};Z^{n})$. Since the proof of the requirement
(\ref{l2}) is straightforward, we need to prove the requirement of (\ref{l3}).
\begin{eqnarray}\nonumber\label{l6}
nR_{e12}&=&  H(W_{1},W_{2}|Z^{n})\\ \nonumber &\geq& H(W_{1},W_{2},Z^{n})-H(Z^{n})\\
\nonumber &=&
H(W_{1},W_{2},U^{n},X^{n},Z^{n})-H(U^{n},X^{n}|W_{1},W_{2},Z^{n})-H(Z^{n})\\
\nonumber &=&
H(W_{1},W_{2},U^{n},X^{n})\\ \nonumber &+& H(Z^{n}|W_{1},W_{2},U^{n},X^{n})-H(U^{n},X^{n}|W_{1},W_{2},Z^{n})-H(Z^{n})\\
\nonumber &\stackrel{(a)}{\geq}& H(W_{1},W_{2},U^{n},X^{n})+H(Z^{n}|W_{1},W_{2},U^{n},X^{n})-n\epsilon_{n}-H(Z^{n})\\
\nonumber &\stackrel{(b)}{=}&H(W_{1},W_{2},U^{n},X^{n})+H(Z|U^{n},X^{n})-n\epsilon_{n}-H(Z^{n})\\
\nonumber&\stackrel{(c)}{\geq}&
H(U^{n},X^{n})+H(Z^{n}|U^{n},X^{n})-n\epsilon_{n}-H(Z^{n})\\
\nonumber &=&
H(U^{n})+H(X^{n}|U^{n})-I(U^{n},X^{n};Z^{n})- n\epsilon_{n}\\
\nonumber &\stackrel{(d)}{=}&
I(U^{n};Y_{2}^{n})+I(X^{n};Y_{1}^{n}|U^{n})-I(X^{n};Z^{n})-I(U^{n};Z^{n}|X^{n})- n\epsilon_{n}\\
\nonumber &\geq& nR_{1}+nR_{2}-n\epsilon_{n},
\end{eqnarray}
where $(a)$ follows from Fano's inequality that
$H(U^{n},X^{n}|W_{1},W_{2},Z^{n})\leq
h(P_{we}^{(n)})+nP_{we}^{n}I(U,X;Z)\leq n\epsilon_{n}$ for
sufficiently large $n$. Here $P_{we}^{n}$ denotes the wiretapper's
error probability of decoding $(u^{n},x^{n})$ in the case that the
bin numbers $w_{1}$ and $w_{2}$ are known to the eavesdropper. The
eavesdropper first looks for the unique $u^{n}$ in bin $w_{2}$ of
the first layer, such that it is jointly typical with $z^{n}$. Since
the number of candidate codewords is less than $I(U;Z)$, then the
probability of error is arbitrarily small for a sufficiently large
$n$. Next, given $u^{n}$, the eavesdropper looks for the unique
$x^{n}$ in bin $w_{1}$ which is jointly typical with $z^{n}$.
Similarly, since the number of available candidates is less than
$I(X;Z)$, then the probability of error decoding is arbitrarily
small.  $(b)$ follows from the  fact that $(W_{1},W_{2})\rightarrow
U\rightarrow X\rightarrow Z$ forms a Markov chain. Therefore, we
have $I(W_{1},W_{2};Z^{n}|U^{n},X^{n})=0$, where it is implied that
$H(Z^{n}|W_{1},W_{2},U^{n},X^{n})=H(Z^{n}|U^{n},X^{n})$. $(c)$
follows from the fact that $H(W_{1},W_{2},U^{n},X^{n})\geq
H(U^{n},X^{n})$. $(d)$ follows from that fact that
$H(U^{n})=I(U^{n};Y_{2}^{n})$ and
$H(X^{n}|U^{n})=I(X^{n};Y_{1}^{n}|U^{n})$.

 \textit{Converse}: The
transmitter sends two independent secret messages $W_{1}$ and
$W_{2}$ to receiver $1$ and receiver $2$ respectively. Let us define
$U_{i}=(W_{2},Y_{1}^{i-1})$. The following chain of inequality
clarifies the proof:
\begin{eqnarray}\nonumber
nR_{1} &\stackrel{(a)}{\leq}&
\sum_{i=1}^{n}I(W_{1};Y_{1,i}|W_{2},Z_{i},Y_{1}^{i-1},\widetilde{Z}^{i+1})+n\delta_{1}+n\epsilon_{3}\\
\nonumber &=&
 \sum_{i=1}^{n}I(W_{1};Y_{1,i}|U_{i},Z_{i},\widetilde{Z}^{i+1})+n\delta_{1}+n\epsilon_{3}\\
\nonumber &\stackrel{(b)}{\leq}& \sum_{i=1}^{n} I(X_{i};Y_{1,i}|U_{i},Z_{i},\widetilde{Z}^{i+1})+n\delta_{1}+n\epsilon_{3}\\
\nonumber &\stackrel{(c)}{=}&
\sum_{i=1}^{n}I(X_{i};Y_{1,i},U_{i},Z_{i}|\widetilde{Z}^{i+1})-I(X_{i};Z_{i}|\widetilde{Z}^{i+1})-I(X_{i};U_{i}|Z_{i},\widetilde{Z}^{i+1}) +n\delta_{1}+n\epsilon_{3}\\
\nonumber&\stackrel{(d)}{=}& \sum_{i=1}^{n} I(X_{i};Y_{1,i}|U_{i},\widetilde{Z}^{i+1})+I(X_{i};U_{i}|\widetilde{Z}^{i+1})-I(X_{i};Z_{i}|\widetilde{Z}^{i+1})-I(X_{i};U_{i}|Z_{i},\widetilde{Z}^{i+1}) +n\delta_{1}+n\epsilon_{3}\\
\nonumber &\stackrel{(e)}{=}& \sum_{i=1}^{n} I(X_{i};Y_{1,i}|U_{i},\widetilde{Z}^{i+1})-I(X_{i};Z_{i}|\widetilde{Z}^{i+1})+I(Z_{i};U_{i}|\widetilde{Z}^{i+1})-I(Z_{i};U_{i}|X_{i},\widetilde{Z}^{i+1})+n\delta_{1}+n\epsilon_{3}\\
\nonumber &\stackrel{(f)}{=}& \sum_{i=1}^{n}I(X_{i};Y_{1,i}|U_{i},\widetilde{Z}^{i+1})-I(X_{i};Z_{i}|\widetilde{Z}^{i+1})+I(Z_{i};U_{i}|\widetilde{Z}^{i+1})+n\delta_{1}+n\epsilon_{3}\\
\nonumber,
\end{eqnarray}
 $(a)$ follows from the
following lemma (\ref{ll1}). $(b)$ follows
from the data processing theorem. $(c)$ follows from the chain rule.
$(d)$ follows from the fact that
$I(X_{i};Y_{1,i},U_{i},Z_{i}|\widetilde{Z}^{i+1})=I(X_{i};U_{i}|\widetilde{Z}^{i+1})+I(X_{i};Y_{1,i}|U_{i},\widetilde{Z}^{i+1})+I(X_{i};Z_{i}|Y_{1,i},U_{i},\widetilde{Z}^{i+1})$
 and from the fact that $\widetilde{Z}^{i+1}U_{i}\rightarrow X_{i} \rightarrow Y_{1,i} \rightarrow Y_{2,i} \rightarrow
Z_{i}$ forms a Markov chain, which means that
$I(X_{i};Z_{i}|Y_{1,i},U_{i},\widetilde{Z}^{i+1})=0$. $(e)$ follows from the fact that
$I(X_{i};U_{i}|\widetilde{Z}^{i+1})-I(X_{i};U_{i}|Z_{i},\widetilde{Z}^{i+1})=I(Z_{i};U_{i}|\widetilde{Z}^{i+1})-I(Z_{i};U_{i}|X_{i},\widetilde{Z}^{i+1})$.
$(f)$ follows from the fact that $\widetilde{Z}^{i+1} U_{i}\rightarrow
X_{i} \rightarrow Z_{i}$ forms a Markov chain. Thus
$I(Z_{i};U_{i}\widetilde{Z}^{i+1}|X_{i})=0$ which implies that $I(Z_{i};U_{i}|X_{i},\widetilde{Z}^{i+1})=0$.
\begin{lem}: \label{ll1}
For the broadcast channel with confidential messages of
$(W_{1},W_{2})\rightarrow X^{n} \rightarrow
Y_{1}^{n}Y_{2}^{n}Z^{n}$, the perfect secrecy rates are bounded as
follows,
\begin{IEEEeqnarray}{rl}
    nR_{1}&\leq \sum_{i=1}^{n}I(W_{1};Y_{1i}|W_{2},Z_{i},Y_{1}^{i-1},\widetilde{Z}^{i+1})+n\delta_{1}+n\epsilon_{3}, \\
    nR_{2}&\leq \sum_{i=1}^{n}I(W_{2};Y_{2i}|Z_{i},Y_{2}^{i-1},\widetilde{Z}^{i+1})+n\delta_{1}+n\epsilon_{2}.
\end{IEEEeqnarray}
\end{lem}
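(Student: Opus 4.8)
The plan is to establish the two bounds separately, each by the same three-step recipe: (i) Fano's inequality from the reliability requirement (\ref{l0}), (ii) the secrecy requirements (\ref{l3}) and (\ref{l2}) to control the information leaked to the eavesdropper, and (iii) Csiszár's sum identity to pass to a single-letter sum. I describe the derivation of the bound on $R_1$; the bound on $R_2$ is obtained the same way.

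Since $W_1$ and $W_2$ are independent, $nR_1=H(W_1)=H(W_1|W_2)=I(W_1;Y_1^n|W_2)+H(W_1|W_2,Y_1^n)$. Receiver $1$ recovers $W_1$ from $Y_1^n$ with vanishing error probability, so Fano's inequality gives $H(W_1|W_2,Y_1^n)\le H(W_1|Y_1^n)\le n\delta_1$ with $\delta_1\to 0$, hence $nR_1\le I(W_1;Y_1^n|W_2)+n\delta_1$. I then add and subtract $I(W_1;Z^n|W_2)$. From the joint-secrecy condition (\ref{l3}), $H(W_1,W_2|Z^n)\ge n(R_1+R_2)-n\epsilon_3$, and since $H(W_2|Z^n)\le H(W_2)=nR_2$ we get $H(W_1|W_2,Z^n)=H(W_1,W_2|Z^n)-H(W_2|Z^n)\ge nR_1-n\epsilon_3$, that is, $I(W_1;Z^n|W_2)=H(W_1)-H(W_1|W_2,Z^n)\le n\epsilon_3$. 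Therefore $nR_1\le I(W_1;Y_1^n|W_2)-I(W_1;Z^n|W_2)+n\delta_1+n\epsilon_3$. For $R_2$ one uses the individual-secrecy condition (\ref{l2}) in place of (\ref{l3}), and drops the conditioning on $W_1$, to reach $nR_2\le I(W_2;Y_2^n)-I(W_2;Z^n)+n\delta_1+n\epsilon_2$.

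The remaining work is to single-letterize $I(W_1;Y_1^n|W_2)-I(W_1;Z^n|W_2)$. I expand $I(W_1;Y_1^n|W_2)=\sum_i I(W_1;Y_{1i}|W_2,Y_1^{i-1})$ in the forward direction and $I(W_1;Z^n|W_2)=\sum_i I(W_1;Z_i|W_2,\widetilde{Z}^{i+1})$ in the backward direction; in the first sum I bring $\widetilde{Z}^{i+1}$ into the conditioning and in the second I bring $Y_1^{i-1}$ into the conditioning, in each coordinate by forming the joint mutual information and splitting it two ways with the chain rule. The cross terms that arise, namely $\sum_i I(\widetilde{Z}^{i+1};Y_{1i}|W_2,Y_1^{i-1})$ and $\sum_i I(\widetilde{Z}^{i+1};Y_{1i}|W_1,W_2,Y_1^{i-1})$ on one side, and $\sum_i I(Y_1^{i-1};Z_i|W_2,\widetilde{Z}^{i+1})$ and $\sum_i I(Y_1^{i-1};Z_i|W_1,W_2,\widetilde{Z}^{i+1})$ on the other, cancel in pairs by Csiszár's sum identity applied once with conditioning $W_2$ and once with conditioning $(W_1,W_2)$. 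This leaves the exact identity $I(W_1;Y_1^n|W_2)-I(W_1;Z^n|W_2)=\sum_i\bigl[I(W_1;Y_{1i}|W_2,Y_1^{i-1},\widetilde{Z}^{i+1})-I(W_1;Z_i|W_2,Y_1^{i-1},\widetilde{Z}^{i+1})\bigr]$. Writing $L_i=(W_2,Y_1^{i-1},\widetilde{Z}^{i+1})$, the chain rule gives $I(W_1;Y_{1i}|L_i)-I(W_1;Z_i|L_i)=I(W_1;Y_{1i}|L_i,Z_i)-I(W_1;Z_i|L_i,Y_{1i})\le I(W_1;Y_{1i}|L_i,Z_i)$, and $I(W_1;Y_{1i}|L_i,Z_i)=I(W_1;Y_{1i}|W_2,Z_i,Y_1^{i-1},\widetilde{Z}^{i+1})$ is precisely the term in the lemma. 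Combining with the bound of the previous paragraph yields $nR_1\le\sum_i I(W_1;Y_{1i}|W_2,Z_i,Y_1^{i-1},\widetilde{Z}^{i+1})+n\delta_1+n\epsilon_3$, and the bound on $R_2$ follows identically with $(Y_2,Y_2^{i-1})$ in place of $(Y_1,Y_1^{i-1})$, no conditioning on $W_1$, and $\epsilon_2$ in place of $\epsilon_3$.

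I expect the single-letterization to be the main obstacle: the bookkeeping of exactly which cross terms are generated when $\widetilde{Z}^{i+1}$ and $Y_1^{i-1}$ are inserted into the conditioning, and verifying that the two invocations of Csiszár's sum identity annihilate them completely rather than leaving a residual term. A secondary point to check at the end is that $\delta_1\to 0$ and $\epsilon_2,\epsilon_3\to 0$ along the assumed sequence of codes, so that the inequalities persist in the limit as claimed.
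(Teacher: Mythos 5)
Your proposal is correct and follows essentially the same route as the paper's proof: bound $nR_1$ by $I(W_1;Y_1^n|W_2)-I(W_1;Z^n|W_2)+n\delta_1+n\epsilon_3$ using the joint secrecy constraint and Fano's inequality, single-letterize by inserting $\widetilde{Z}^{i+1}$ and $Y_1^{i-1}$ and cancelling the cross terms via two applications of Csisz\'ar's sum identity, and finish with the chain-rule step $I(W_1;Y_{1i}|L_i)-I(W_1;Z_i|L_i)\le I(W_1;Y_{1i}|L_i,Z_i)$ (the paper phrases this last step as ``conditioning decreases entropy,'' which is the same inequality). The only difference is cosmetic ordering of the Fano and secrecy steps.
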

\begin{proof}
We need to prove the first bound. The second bound can similarly be
proven. According to the above discussion $nR_{1}$ is bounded as
follows:
\begin{eqnarray}\nonumber
nR_{1} &\stackrel{(a)}{\leq}& H(W_{1}|W_{2},Z^{n})+n\epsilon_{3} \\
\nonumber &\stackrel{(b)}{\leq}&
H(W_{1}|W_{2},Z^{n})-H(W_{1}|Y_{1}^{n},W_{2})+n\delta_{1}+n\epsilon_{3}\\
\nonumber &=& I(W_{1};Y_{1}^{n}|W_{2})-I(W_{1};Z^{n}|W_{2})+n\delta_{1}+n\epsilon_{3}
 %&=&H(Y^{n})-H(Z^{n})+H(Z^{n}|W)-H(Y^{n}|W)\\ \nonumber
% &\stackrel{(b)}{=}&\sum_{i=1}^{n}H(Y_{i}|Y^{i-1})-H(Z_{i}|Z^{i-1})+ H(Z_{i}|W)-H(Y_{i}|W)\\ \nonumber
% &\stackrel{(c)}{\leq}&\sum_{i=1}^{n}H(Y_{i}|Y^{i-1})-H(Z_{i}|Z^{i-1},Y^{i-1})+H(Z_{i}|W)-H(Y_{i}|W)\\ \nonumber
% &\stackrel{(d)}{=}&\sum_{i=1}^{n}H(Y_{i}|Y^{i-1})-H(Z_{i}|Y^{i-1})+ H(Z_{i}|W,Y^{i-1})-H(Y_{i}|W,Y^{i-1})\\ \nonumber
% &=&\sum_{i=1}^{n}I(W,Y_{i}|Y^{i-1})-I(W,Z_{i}|Y^{i-1})\\ \nonumber
%&=&\sum_{i=1}^{n}H(W|Z_{i},Y^{i-1})-H(W|Y_{i},Y^{i-1})\\ \nonumber
%&\stackrel{(e)}{=}&\sum_{i=1}^{n}H(W|Z_{i},Y^{i-1})-H(W|Y_{i},Z_{i},Y^{i-1})\\
%&=&\sum_{i=1}^{n}I(W;Y_{i}|Z_{i},Y^{i-1})\nonumber,
\end{eqnarray}
where $(a)$ follows from the secrecy constraint that
$H(W_{1},W_{2}|Z^{n})\geq H(W_{1},W_{2})-n\epsilon_{3}$, the fact
that $H(W_{2}|Z^{n})\leq H(W_{2})$ and the fact that two messages
are independent. $(b)$ follows from Fano's inequality that
$H(W_{1}|Y_{1}^{n},W_{2}) \leq n\delta_{1}$.
%where $(a)$ follows from the Fano's inequality. %$(b)$ follows from the fact that the
%channels are DMC. $(c)$ follows from the fact that conditioning
%decreases entropy. $(d)$ follows from the fact that
%$Z^{i-1}\rightarrow Y^{i-1}\rightarrow Z_{i}$ and
%$Y^{i-1}\rightarrow W \rightarrow Y_{i}\rightarrow Z_{i}$ are Markov
%chains. $(e)$ follows from the fact that, given $Y^{i-1}$,
%$W\rightarrow Y_{i} \rightarrow Z_{i}$ is a Markov chain.
Next, we expand $I(W_{1};Y_{1}^{n}|W_{2})$ and $I(W_{1};Z^{n}|W_{2})$
starting with $I(W_{1};Y_{1}|W_{2})$ and $I(W_{1};\widetilde{Z}^{n}|W_{2})$,
respectively.
\begin{eqnarray}\nonumber
I(W_{1};Y_{1}^{n}|W_{2}) &=&\sum_{i=1}^{n}I(W_{1};Y_{1i}|W_{2},Y_{1}^{i-1}) \\
\nonumber
&=&\sum_{i=1}^{n}I(W_{1},\widetilde{Z}^{i+1};Y_{1i}|W_{2},Y_{1}^{i-1})-I(\widetilde{Z}^{i+1};Y_{1i}|W_{1},W_{2},Y_{1}^{i-1})\\
\nonumber &=&\sum_{i=1}^{n}
I(W_{1};Y_{1i}|W_{2},Y_{1}^{i-1},\widetilde{Z}^{i+1})+I(\widetilde{Z}^{i+1};Y_{1i}|W_{2},Y_{1}^{i-1})-I(\widetilde{Z}^{i+1};Y_{1i}|W_{1},W_{2},Y_{1}^{i-1})\\
\nonumber
&=&\sum_{i=1}^{n}I(W_{1};Y_{1i}|W_{2},Y_{1}^{i-1},\widetilde{Z}^{i+1})+\Delta_{1}-\Delta_{2},
\end{eqnarray}
where,
$\Delta_{1}=\sum_{i=1}^{n}I(\widetilde{Z}^{i+1};Y_{1i}|W_{2},Y_{1}^{i-1})$ and
$\Delta_{2}=\sum_{i=1}^{n}I(\widetilde{Z}^{i+1};Y_{1i}|W_{1},W_{2},Y_{1}^{i-1})$.
Similarly, we have,
\begin{eqnarray}\nonumber
I(W_{1};Z^{n}|W_{2}) &=&\sum_{i=1}^{n}I(W_{1};Z_{i}|W_{2},\widetilde{Z}^{i+1}) \\
\nonumber
&=&\sum_{i=1}^{n}I(W_{1},Y_{1}^{i-1};Z_{i}|W_{2},\widetilde{Z}^{i+1})-I(Y_{1}^{i-1};Z_{i}|W_{1},W_{2},\widetilde{Z}^{i+1})\\
\nonumber &=&\sum_{i=1}^{n}
I(W_{1};Z_{i}|W_{2},Y_{1}^{i-1},\widetilde{Z}^{i+1})+I(Y_{1}^{i-1};Z_{i}|W_{2},\widetilde{Z}^{i+1})-I(Y_{1}^{i-1};Z_{i}|W_{1},W_{2},\widetilde{Z}^{i+1})\\
\nonumber
&=&\sum_{i=1}^{n}I(W_{1};Z_{i}|W_{2},Y_{1}^{i-1},\widetilde{Z}^{i+1})+\Delta_{1}^{*}-\Delta_{2}^{*},
\end{eqnarray}
where,
$\Delta_{1}^{*}=\sum_{i=1}^{n}I(Y_{1}^{i-1};Z_{i}|W_{2},\widetilde{Z}^{i+1})$
and
$\Delta_{2}^{*}=\sum_{i=1}^{n}I(Y_{1}^{i-1};Z_{i}|W_{1},W_{2},\widetilde{Z}^{i+1})$.
According to lemma $7$ of \cite{4}, $\Delta_{1}=\Delta_{1}^{*}$ and
$\Delta_{2}=\Delta_{2}^{*}$. Thus, we have,
\begin{eqnarray}\nonumber
nR_{1}&\leq& \sum_{i=1}^{n}I(W_{1};Y_{1i}|W_{2},Y_{1}^{i-1},\widetilde{Z}^{i+1})-I(W_{1};Z_{i}|W_{2},Y_{1}^{i-1},\widetilde{Z}^{i+1})+n\delta_{1}+n\epsilon_{3}
\\ \nonumber &=&\sum_{i=1}^{n}H(W_{1}|W_{2},Z_{i},Y_{1}^{i-1},\widetilde{Z}^{i+1})-H(W_{1}|W_{2},Y_{1i},Y_{1}^{i-1},\widetilde{Z}^{i+1})+n\delta_{1}+n\epsilon_{3}\\ \nonumber
&\stackrel{(a)}{\leq}&\sum_{i=1}^{n}H(W_{1}|W_{2},Z_{i},Y_{1}^{i-1},\widetilde{Z}^{i+1})-H(W_{1}|W_{2},Y_{1i},Z_{i},Y_{1}^{i-1},\widetilde{Z}^{i+1})+n\delta_{1}+n\epsilon_{3}\\
\nonumber
&=&\sum_{i=1}^{n}I(W_{1};Y_{1i}|W_{2},Z_{i},Y_{1}^{i-1},\widetilde{Z}^{i+1})+n\delta_{1}\nonumber+n\epsilon_{3},
\end{eqnarray}
where $(a)$ follows from the fact that conditioning always decreases
the entropy.
\end{proof}
For the second receiver, we have
\begin{eqnarray}\nonumber
nR_{2} &\stackrel{(a)}{\leq}&\sum_{i=1}^{n}I(W_{2};Y_{2,i}|Y_{2}^{i-1},Z_{i},\widetilde{Z}^{i+1})+n\delta_{2}+n\epsilon_{1}\\
\nonumber
&=&\sum_{i=1}^{n}H(Y_{2,i}|Y_{2}^{i-1},Z_{i},\widetilde{Z}^{i+1})-H(Y_{2,i}|W_{2},Y_{2}^{i-1},Z_{i},\widetilde{Z}^{i+1})+n\delta_{2}+n\epsilon_{1}\\
\nonumber &\stackrel{(b)}{\leq}&\sum_{i=1}^{n}H(Y_{2,i}|Z_{i},\widetilde{Z}^{i+1})-H(Y_{2,i}|W_{2},Y_{1}^{i-1},Y_{2}^{i-1},Z_{i},\widetilde{Z}^{i+1})+n\delta_{2}+n\epsilon_{1}\\
\nonumber &\stackrel{(c)}{=}&\sum_{i=1}^{n}H(Y_{2,i}|Z_{i},\widetilde{Z}^{i+1})-H(Y_{2,i}|U_{i},Z_{i},\widetilde{Z}^{i+1})+n\delta_{2}+n\epsilon_{1}\\
\nonumber &=&\sum_{i=1}^{n}I(Y_{2,i};U_{i}|Z_{i},\widetilde{Z}^{i+1})+n\delta_{2}+n\epsilon_{1}\\
\nonumber &=&\sum_{i=1}^{n}I(Y_{2,i};U_{i}|\widetilde{Z}^{i+1})+I(Y_{2,i};Z_{i}|U_{i},\widetilde{Z}^{i+1})-I(Y_{2,i};Z_{i}|\widetilde{Z}^{i+1})+n\delta_{2}+n\epsilon_{1}\\
\nonumber &=&\sum_{i=1}^{n}I(Y_{2,i};U_{i}|\widetilde{Z}^{i+1})-I(Z_{i};U_{i}|\widetilde{Z}^{i+1})+I(Z_{i};U_{i}|Y_{2,i},\widetilde{Z}^{i+1})+n\delta_{2}+n\epsilon_{1}\\
\nonumber
&\stackrel{(d)}{=}&\sum_{i=1}^{n}I(Y_{2,i};U_{i}|\widetilde{Z}^{i+1})-I(Z_{i};U_{i}|\widetilde{Z}^{i+1})+n\delta_{2}+n\epsilon_{1}\nonumber,
\end{eqnarray}
where $(a)$ follows from the lemma (\ref{ll1}). $(b)$
follows from the fact that conditioning always decreases the entropy.
$(c)$ follows from the fact that $Y_{2}^{i-1}\rightarrow W_{2}\widetilde{Z}^{i+1}Y_{1}^{i-1}\rightarrow Y_{2i}\rightarrow Z_{i}$ forms a Markov chain. $(d)$ follows from the fact that
$\widetilde{Z}^{i+1}U_{i}\rightarrow Y_{2,i} \rightarrow Z_{i}$ forms a
Markov chain. Thus $I(Z_{i};U_{i}\widetilde{Z}^{i+1}|Y_{2i})=0$ which implies that $I(Z_{i};U_{i}|Y_{2i},\widetilde{Z}^{i+1})=0$. Now, following \cite{34}, let us define the time
sharing random variable $Q$ which is uniformly distributed over
$\{1,2,...,n\}$ and independent of
$(W_{1},W_{2},X^{n},Y_{1}^{n},Y_{2}^{n})$. Let us define
$U=U_{Q},~ V=(\widetilde{Z}^{Q+1},Q),~ X=X_{Q},~ Y_{1}=Y_{1,Q},~ Y_{2}=Y_{2,Q},~ Z=Z_{Q}$,
then we can bound $R_{1}$ and $R_{2}$ as follows
\begin{IEEEeqnarray}{rl}
    R_{1}&\leq I(X;Y_{1}|U,V)+I(U;Z|V)-I(X;Z|V), \\
    R_{2}&\leq I(U;Y_{2}|V)-I(U;Z|V).
\end{IEEEeqnarray}
Since Conditional mutual informations are average of unconditional
ones, the maximum region is achieved when $V$ is a constant. This
proves the converse part.
\end{proof}
\section{Gaussian BCCs}
In this section we consider the physically degraded AWGN broadcast channel with confidential messages. We show that secret superposition coding with
 Gaussian codebook is optimal. At time $i$ the received signals are $Y_{1i}=X_{i}+n_{1i}$,  $Y_{2i}=X_{i}+n_{2i}$ and  $Z_{i}=X_{i}+n_{3i}$, where $n_{1i}$'s,
$n_{2i}$'s and $n_{3i}$'s are each independent identically distributed Gaussian random variables with zero means and $Var(n_{ji})=N_{j}$, j=1,2,3. All noises are independent
of $X_{i}$ and $N_{1}\leq N_{2}\leq N_{3}$. Assume that transmitted power is limited to $E[X^{2}]\leq P$. Since the channels are degraded, at time $i$,
$Y_{1i}=X_{i}+n_{1i}$,  $Y_{2i}=Y_{1i}+n_{2i}^{'}$ and  $Z_{i}=Y_{2i}+n_{3i}^{'}$, where $n_{1i}$'s are i.i.d $\mathcal{N}(0,N_{1})$,
$n_{2i}^{'}$'s are i.i.d $\mathcal{N}(0,N_{1}-N_{2})$, and $n_{3i}^{'}$'s are i.i.d $\mathcal{N}(0,N_{2}-N_{3})$. Fig.\ref{f4} shows the equivalent channels for the physically
degraded AWGN-BCCs. The following theorem illustrates the secrecy capacity region of AWGN-BCCs.
\begin{thm}
The secrecy capacity region of the AWGN broadcast channel with confidential messages is given by the set of rates pairs $(R_{1},R_{2})$
such that
\begin{IEEEeqnarray}{rl}
    R_{1}&\leq C\left(\frac{\alpha P}{N_{1}}\right)+C\left(\frac{(1-\alpha) P}{\alpha P +N_{3}}\right)-C\left(\frac{P}{N_{3}}\right), \\
    R_{2}&\leq C\left(\frac{(1-\alpha) P}{\alpha P +N_{2}}\right)-C\left(\frac{(1-\alpha) P}{\alpha P +N_{3}}\right).
\end{IEEEeqnarray}
for some $\alpha \in [0,1]$.
\end{thm}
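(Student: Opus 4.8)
The plan is to obtain achievability by evaluating the secrecy capacity region of the degraded broadcast channel (Theorem~2) at a Gaussian input, and to obtain the converse by specializing the single‑letter bounds of Theorem~2 to the Gaussian channels and then squeezing them with entropy–power inequalities.

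\emph{Achievability.} Since the AWGN broadcast channel under consideration is physically degraded, Theorem~2 applies, and it suffices to exhibit, for each $\alpha\in[0,1]$, an input $P(u)P(x|u)$ with $\E[X^2]\le P$ that makes the two bounds of Theorem~2 evaluate to the claimed rates. Taking $U\sim\mathcal N(0,(1-\alpha)P)$ and $X=U+V$ with $V\sim\mathcal N(0,\alpha P)$ independent of $U$ gives $\E[X^2]=P$ and, by the standard Gaussian mutual‑information formulas, $I(X;Y_1|U)=C(\alpha P/N_1)$, $I(U;Z)=C\big(\tfrac{(1-\alpha)P}{\alpha P+N_3}\big)$, $I(X;Z)=C(P/N_3)$, $I(U;Y_2)=C\big(\tfrac{(1-\alpha)P}{\alpha P+N_2}\big)$. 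Substituting into $R_1\le I(X;Y_1|U)+I(U;Z)-I(X;Z)$ and $R_2\le I(U;Y_2)-I(U;Z)$ reproduces the two inequalities of the theorem verbatim, so every such pair is achievable.

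\emph{Converse: the bound on $R_1$.} The converse of Theorem~2 goes through for the Gaussian channels with differential entropies in place of entropies and the average constraint $\tfrac1n\sum_i\E[X_i^2]\le P$ in place of finiteness of the alphabets; hence every achievable pair satisfies, for some $U$ with $U\to X\to Y_1\to Y_2\to Z$ and $\E[X^2]\le P$, the bounds $R_1\le I(X;Y_1|U)+I(U;Z)-I(X;Z)$ and $R_2\le I(U;Y_2)-I(U;Z)$. Because $h(Y_1|U)$ lies between $h(Y_1|X)=\tfrac12\log 2\pi e\,N_1$ and $h(Y_1)\le\tfrac12\log 2\pi e\,(P+N_1)$, there is a unique $\alpha\in[0,1]$ with $h(Y_1|U)=\tfrac12\log 2\pi e\,(\alpha P+N_1)$. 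Using $h(Y_1|X)=\tfrac12\log 2\pi e\,N_1$ and $h(Z|X)=\tfrac12\log 2\pi e\,N_3$, the first bound becomes $R_1\le h(Y_1|U)-h(Z|U)+\tfrac12\log(N_3/N_1)$; since degradedness makes $Z=Y_1+n$ with $n\sim\mathcal N(0,N_3-N_1)$ independent of $(Y_1,U)$, the conditional entropy‑power inequality gives $2^{2h(Z|U)}\ge 2^{2h(Y_1|U)}+2\pi e(N_3-N_1)=2\pi e(\alpha P+N_3)$, i.e.\ $h(Z|U)\ge\tfrac12\log 2\pi e\,(\alpha P+N_3)$. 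Substituting and simplifying gives $R_1\le C(\alpha P/N_1)-C(\alpha P/N_3)$, which is algebraically the same as $C(\alpha P/N_1)+C\big(\tfrac{(1-\alpha)P}{\alpha P+N_3}\big)-C(P/N_3)$. This step uses nothing more than the plain conditional EPI and the definition of $\alpha$, and I expect it to be routine.

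\emph{Converse: the bound on $R_2$ — the main obstacle.} Writing $R_2\le I(U;Y_2)-I(U;Z)=\big[h(Y_2)-h(Z)\big]-\big[h(Y_2|U)-h(Z|U)\big]$, one bounds the unconditional part $h(Y_2)-h(Z)\le\tfrac12\log\tfrac{P+N_2}{P+N_3}$ using the EPI $2^{2h(Z)}\ge 2^{2h(Y_2)}+2\pi e(N_3-N_2)$ together with $h(Y_2)\le\tfrac12\log 2\pi e\,(P+N_2)$. The delicate point is the conditional part: bounding $h(Z|U)\le h(Z)$, or applying the conditional EPI separately to $h(Y_2|U)$ and to $h(Z|U)$, only yields $R_2\le C\big(\tfrac{(1-\alpha)P}{\alpha P+N_2}\big)$, which is strictly larger than the claimed bound unless $N_2=N_3$ (in which case $Z\equiv Y_2$ and $R_2=0$). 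To recover the extra term $-C\big(\tfrac{(1-\alpha)P}{\alpha P+N_3}\big)$ \emph{for the same} $\alpha$, one must control the three conditional entropy powers $N(Y_1|U)$, $N(Y_2|U)$, $N(Z|U)$ jointly along the chain $Y_1\to Y_2\to Z$ rather than one at a time; this is exactly the role of the generalized (Costa‑type) entropy‑power inequality, which forces the conditional entropy‑power increments to be nonincreasing in the accumulated noise variance and thereby, given $h(Y_1|U)=\tfrac12\log 2\pi e\,(\alpha P+N_1)$, pins the quantity $h(Y_2|U)-h(Z|U)$ to its Gaussian value — with the averaging over $U$ kept under control by the power constraint via Jensen, $h(Z|U)\le\tfrac12\log 2\pi e\,(P+N_3)$. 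Pushing this through should give $R_2\le C\big(\tfrac{(1-\alpha)P}{\alpha P+N_2}\big)-C\big(\tfrac{(1-\alpha)P}{\alpha P+N_3}\big)$ and close the converse. The hard part, then, is not the bookkeeping for $R_1$ but establishing this coupling of the conditional differential entropies across the degraded chain; once that is in place the rest is algebra.
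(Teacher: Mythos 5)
Your achievability argument and your $R_1$ converse are fine (indeed the $R_1$ half, via defining $\alpha$ by $h(Y_1|U)=\frac12\log 2\pi e(\alpha P+N_1)$ and then applying the plain conditional EPI along $Z=Y_1+n$, is more elementary than what the paper does). The genuine gap is exactly where you flag it: the $R_2$ bound for that same $\alpha$. You do not supply the "coupling" argument, and the tools you gesture at do not close it as stated. Conditional EPI only gives \emph{lower} bounds $h(Y_2|U)\ge\frac12\log 2\pi e(\alpha P+N_2)$ and $h(Z|U)\ge\frac12\log 2\pi e(\alpha P+N_3)$, whereas you need an \emph{upper} bound on $h(Z|U)-h(Y_2|U)$; and combining Costa-type concavity of the conditional entropy power along the chain with the only available upper bound $N(Z|U)\le P+N_3$ yields $\frac{N(Z|U)}{N(Y_2|U)}\le\frac{P+N_3}{\frac{N_3-N_2}{N_3-N_1}(\alpha P+N_1)+\frac{N_2-N_1}{N_3-N_1}(P+N_3)}$, which for $\alpha<1$ exceeds the target $\frac{\alpha P+N_3}{\alpha P+N_2}$ (e.g.\ $N_1=1$, $N_2=2$, $N_3=3$, $P=1$, $\alpha=0$ gives $1.6$ versus $1.5$). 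So the half of the converse you leave as "routine once the coupling is in place" is the half that actually carries all the difficulty, and your proposed route to it fails.

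The paper sidesteps this by fixing $\alpha$ from the \emph{second} receiver rather than the first. It starts from $R_2\le I(U;Y_2|Z)=H(Y_2|Z)-H(Y_2|U,Z)$, bounds $H(Y_2|Z)\le\frac n2\log\frac{(N_3-N_2)(P+N_2)}{P+N_3}$ by the unconditional EPI (plus monotonicity of $t\mapsto t-\frac n2\log(2^{2t/n}+a)$), sandwiches $H(Y_2|U,Z)$ between $H(Y_2|U,X,Z)$ and $H(Y_2|Z)$, and \emph{defines} $\alpha\in[0,1]$ by $H(Y_2|U,Z)=\frac n2\log\frac{(N_3-N_2)(\alpha P+N_2)}{\alpha P+N_3}$ via the intermediate value theorem; the $R_2$ bound is then immediate by substitution. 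All the difficulty is thereby pushed onto $R_1=h(X+n_1|U)-h(X+n_3|U)+\frac n2\log(N_3/N_1)$, which involves a \emph{difference} of two conditional entropies of $X$ plus independent Gaussians and therefore cannot be handled by the one-sided conditional EPI; the paper invokes the Liu--Viswanath extremal inequality (its Lemma on maximizing $H(X+n_1|U)-H(X+n_2|U)$ subject to ${\rm Var}(X|U)\le s$, with $s=\alpha P$ read off from the definition of $\alpha$) to conclude that the Gaussian conditional law is optimal. If you want to keep your ordering (define $\alpha$ from $h(Y_1|U)$), you would need the analogous extremal inequality applied to $h(X+n_2|U)-h(X+n_3|U)$ under a conditional variance constraint consistent with your $\alpha$; without such a tool the $R_2$ step does not go through.
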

\begin{figure}
\centerline{\includegraphics[scale=.7]{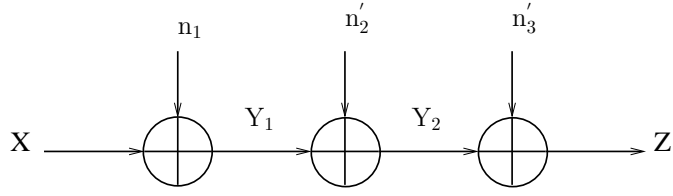}} \caption{equivalent channels for the AWGN-BCCs}  \label{f4}
\end{figure}
\begin{proof}

\textit{Achievability}: Let $U\sim \mathcal{N}(0,(1-\alpha)P)$ and $X^{'}\sim \mathcal{N}(0,\alpha P)$ be independent and $X=U+X^{'} \sim \mathcal{N}(0,P)$.
Therefore, the amount of $I(X;Y_{1}|U)$, $I(U;Z)$, $I(X;Z)$, and $I(U;Y_{2})$ can be easily evaluated.
Now consider the following secure superposition coding scheme:

1) \textit{Codebook Generation}: Generate $2^{nI(U;Y_{2})}$ i.i.d Gaussian codewords $u^{n}$ with average power $(1-\alpha)P$
and randomly distribute these codewords into $2^{nR_{2}}$ bins. Then index each bin by $w_{2}\in\{1,2,...,2^{nR_{2}}\}$.
Generate an independent set of $2^{nI(X^{'};Y_{1}})$ i.i.d Gaussian codewords $x^{'n}$ with average power $\alpha P$. Then, Randomly distribute
them into $2^{nR_{1}}$ bins. Index each bin by $w_{1}\in\{1,2,...,2^{nR_{1}}\}$.

2) \textit{Encoding}: To send messages $w_{1}$ and $w_{2}$, the
transmitter randomly chooses one of the codewords in bin $w_{2}$,
(say $u^{n}$) and one of the codewords in bin $w_{1}$ (say $x^{'n}$ ). Then, simply transmits
$x^{n} =u^{n}+ x^{'n}$.

3) \textit{Decoding}: The received signal at the legitimate
receivers are $y_{1}^{n}$ and $y_{2}^{n}$ respectively.
Receiver $2$ determines the unique $u^{n}$ such that
$(u^{n},y_{2}^{n})$ are jointly typical and declares the index of
the bin containing $u^{n}$ as the message received. If there is none
of such or more than of one such, an error is declared. Receiver $1$
uses successive cancelation method; first decodes $u^{n}$ and subtracts off $y_{1}^{n}$ and then
looks for the unique $x^{'n}$ such that
$(x^{'n},y_{1}^{n})$ are jointly typical and declares the
indexes of the bin containing $x^{'n}$ as the message
received.

The error probability analysis and equivocation calculation is straightforward and may therefor be omitted.

\textit{Converse}: According to the previous section, $R_{2}$ is bounded as follows:

\begin{eqnarray}\label{l10}
R_{2} \leq I(Y_{2};U|Z) = H(Y_{2}|Z)-H(Y_{2}|U,Z)
\end{eqnarray}
The classical entropy power inequality states that:
\begin{eqnarray}\nonumber
2^{\frac{2}{n}H(Y_{2}+n_{3}^{'})}\geq 2^{\frac{2}{n}H(Y_{2})}+2^{\frac{2}{n}H(n_{3}^{'})}
\end{eqnarray}
Therefore, $H(Y_{2}|Z)$ can be written as follows:
\begin{eqnarray}\nonumber
H(Y_{2}|Z)&=&H(Z|Y_{2})+H(Y_{2})-H(Z)\\ \nonumber
&=& \frac{n}{2}\log(N_{3}-N_{2})+H(Y_{2})-H(Y_{2}+n_{3}^{'})\\ \nonumber
&\leq&\frac{n}{2}\log(N_{3}-N_{2})+H(Y_{2})-\frac{n}{2}\log(2^{\frac{2}{n}H(Y_{2})}+N_{3}-N_{2})
\end{eqnarray}
On the other hand, for any fixed $a\in \mathcal{R}$, the function
\begin{eqnarray}\nonumber
f(t,a)=t-\frac{n}{2}\log(2^{\frac{2}{n}t}+a)
\end{eqnarray}
is concave in $t$ and has a global maximum at $t=t_{max}$. Thus, $H(Y_{2}|Z)$ is maximized when $Y_{2}$ (or equivalently $X$) has Gaussian distribution. Hence,
\begin{eqnarray} \label{l8}
H(Y_{2}|Z)&\leq& \frac{n}{2}\log(N_{3}-N_{2})+\frac{n}{2}\log(P+N_{2})-\frac{n}{2}\log(P+N_{3})\\ \nonumber
&=&\frac{n}{2}\log\left(\frac{(N_{3}-N_{2})(P+N_{2})}{P+N_{3}}\right)
\end{eqnarray}
Now consider the term $H(Y_{2}|U,Z)$. This term is lower bounded with $H(Y_{2}|U,X,Z)=\frac{n}{2}\log(N_{2})$ which is greater
 than $\frac{n}{2}\log(\frac{N_{2}(N_{3}-N_{2})}{N_{3}})$.
Hence,
\begin{eqnarray}\label{l9}
\frac{n}{2}\log(\frac{N_{2}(N_{3}-N_{2})}{N_{3}})\leq H(Y_{2}|U,Z)\leq H(Y_{2}|Z)
\end{eqnarray}
Inequalities (\ref{l8}) and (\ref{l9}) imply that there exists a $\alpha \in [0,1]$ such that
\begin{eqnarray}\label{l11}
H(Y_{2}|U,Z)=\frac{n}{2}\log\left(\frac{(N_{3}-N_{2})(\alpha P+N_{2})}{\alpha P+N_{3}}\right)
\end{eqnarray}
Substituting (\ref{l11}) and (\ref{l8}) into (\ref{l10}) yields the desired bound
\begin{eqnarray}
R_{2} &\leq&  H(Y_{2}|Z)-H(Y_{2}|U,Z) \\ \nonumber
&\leq&\frac{n}{2}\log\left(\frac{(P+N_{2})(\alpha P + N_{3})}{(P+N_{3})(\alpha P + N_{2})}\right)\\ \nonumber
&=& C\left(\frac{(1-\alpha) P}{\alpha P +N_{2}}\right)-C\left(\frac{(1-\alpha) P}{\alpha P +N_{3}}\right)
\end{eqnarray}
To bound the rate $R_{1}$, we need the following generalized entropy power inequality which is proven in \cite{35}.
\begin{lem} \cite{35}:
Let $n_{1}$, $n_{2}$ be two gaussian random variables. Let $U$ be a random variable independent of $n_{1}$ and $n_{2}$. Consider
The optimization problem
\begin{eqnarray}
\max_{P(X|U)}&& H(X+n_{1}|U)-H(X+n_{2}|U)\\
\text{subject to}&&  Var(X|U)\leq s
\end{eqnarray}
where the maximization is over all distribution of $X$ given $U$ independent of $n_{1}$ and $n_{2}$. A Gaussian $P(x|u)$ with same variance for each $u$ is an optimal solution
for this optimization problem.
\end{lem}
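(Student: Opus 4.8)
The plan is to reduce the stated optimization to the conditional entropy power inequality (EPI) together with the maximum-entropy property of the Gaussian. Denote $N_{a}=\mathrm{Var}(n_{1})$, $N_{b}=\mathrm{Var}(n_{2})$. Since the objective $H(X+n_{1}|U)-H(X+n_{2}|U)$ depends only on the marginal laws of $n_{1}$ and $n_{2}$, we are free to choose their joint law; and since in the intended application $n_{1}$ is the less noisy variable, assume $N_{a}\le N_{b}$ and realize $n_{2}=n_{1}+n'$ with $n'\sim\mathcal{N}(0,N_{b}-N_{a})$ independent of $(U,X,n_{1})$, so that $X+n_{2}\stackrel{d}{=}(X+n_{1})+n'$. (The opposite ordering $N_{a}>N_{b}$ is not needed for our converse and, if wanted, is immediate from the data-processing inequality applied to the degradation $X\to X+n_{2}\to X+n_{1}$, which gives $H(X+n_{1}|U)-H(X+n_{2}|U)\le H(n_{1})-H(n_{2})$, attained by a point mass given $U$.)

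\textbf{Step 1 (conditional EPI).} Applying the unconditional EPI to the independent pair $X+n_{1}$, $n'$ conditioned on each value $U=u$ gives $2^{\frac{2}{n}H(X+n_{2}|U=u)}\ge 2^{\frac{2}{n}H(X+n_{1}|U=u)}+2\pi e(N_{b}-N_{a})$. Because $H(n'|U=u)=H(n')$ does not depend on $u$, averaging these inequalities over $u$ by Jensen's inequality, using that $y\mapsto\log_{2}(2^{y}+d)$ is convex for $d\ge 0$, upgrades them to the conditional form
\[ 2^{\frac{2}{n}H(X+n_{2}|U)}\ \ge\ 2^{\frac{2}{n}H(X+n_{1}|U)}+2\pi e\,(N_{b}-N_{a}). \]
Setting $t:=H(X+n_{1}|U)$ and $a:=2\pi e(N_{b}-N_{a})>0$, we obtain
\[ H(X+n_{1}|U)-H(X+n_{2}|U)\ \le\ f(t,a):=t-\tfrac{n}{2}\log_{2}\!\left(2^{\frac{2}{n}t}+a\right). \]

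\textbf{Step 2 (optimize over the feasible $t$, and achievability).} A direct computation gives $\partial f/\partial t=a/(2^{\frac{2}{n}t}+a)>0$, so $f(\cdot,a)$ is strictly increasing; hence the right-hand side is largest when $t=H(X+n_{1}|U)$ is as large as the variance constraint permits. By the maximum-entropy property of the Gaussian, $H(X+n_{1}|U=u)\le\frac{n}{2}\log\!\left(2\pi e(\mathrm{Var}(X|U=u)+N_{a})\right)$ for each $u$; averaging over $u$ and invoking concavity of $\log$ (Jensen) gives $H(X+n_{1}|U)\le\frac{n}{2}\log\!\left(2\pi e(\mathrm{Var}(X|U)+N_{a})\right)\le\frac{n}{2}\log\!\left(2\pi e(s+N_{a})\right)$. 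Substituting this into the increasing function $f$,
\[ H(X+n_{1}|U)-H(X+n_{2}|U)\ \le\ \tfrac{n}{2}\log\frac{s+N_{a}}{s+N_{b}}. \]
Conversely, taking $X\,|\,U=u\sim\mathcal{N}(\mu_{u},s)$ for every $u$ (the means $\mu_{u}$ are immaterial) yields $H(X+n_{1}|U)=\frac{n}{2}\log(2\pi e(s+N_{a}))$ and $H(X+n_{2}|U)=\frac{n}{2}\log(2\pi e(s+N_{b}))$, so the difference equals $\frac{n}{2}\log\frac{s+N_{a}}{s+N_{b}}$ and the bound is attained. Thus a $P(x|u)$ that is Gaussian for each $u$ with the common variance $s$ is optimal; tracking the equality conditions (equality in the conditional EPI forces $X|U=u$ Gaussian with $H(X+n_{1}|U=u)$ constant in $u$, and equality in the Jensen/max-entropy step, by strict concavity of $\log$, then forces $\mathrm{Var}(X|U=u)$ to be the same constant $s$ for all $u$) shows it is essentially the unique optimizer.

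The only non-routine ingredient is Step 1, together with the observation that the variance constraint is \emph{active} at the optimum: it is enlarging $\mathrm{Var}(X|U)$ up to $s$, rather than shrinking $X$, that maximizes the entropy gap, and the sign of $\partial f/\partial t$ is precisely the computation that turns this intuition into a proof. Everything else is the maximum-entropy bound and Jensen's inequality.
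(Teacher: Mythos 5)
Your argument is correct, and it necessarily takes a different route from the paper, because the paper offers no proof of this lemma at all: it defers entirely to \cite{35}. That reference establishes a considerably more general, covariance-constrained vector extremal inequality by a perturbation/enhancement argument, precisely because the classical entropy power inequality is not tight in that setting; your observation is that in the scalar white-noise case actually needed here, the heavy machinery is unnecessary and the conditional EPI plus the Gaussian maximum-entropy bound already close the argument. Indeed your function $f(t,a)=t-\tfrac{n}{2}\log(2^{2t/n}+a)$ is the very one the paper uses in its $R_{2}$ converse, and your computation $\partial f/\partial t=a/(2^{2t/n}+a)>0$ is the correct, sharper version of the paper's loose claim that $f$ ``is concave with a global maximum at $t=t_{max}$'' (for $a>0$ the function is concave and strictly increasing with supremum $0$ only as $t\to\infty$, so the maximum over the feasible set sits at the largest admissible $t$, exactly as you argue). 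Two details are worth making explicit if this is written up: the coupling $n_{2}=n_{1}+n'$ is legitimate because the objective depends only on the marginal laws of $n_{1},n_{2}$ and their independence from $(U,X)$; and the constraint $Var(X|U)\le s$ should be read as a constraint on the average ${\rm E}_{U}[Var(X|U)]$, which your Jensen step in the maximum-entropy bound handles correctly, so your proof in fact establishes the statement under this weaker average constraint. Your aside on the reversed ordering $Var(n_{1})>Var(n_{2})$ is also right, and it is worth noting that there the optimizer degenerates to zero conditional variance, so the lemma's phrasing ``Gaussian with the same variance for each $u$'' survives only in the degenerate sense; fortunately the application in the paper has $N_{1}\le N_{3}$, so only your main case is used.
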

The rate $R_{1}$ is bounded as follows
\begin{eqnarray}
R_{1}&\leq& I(X;Y_{1}|U)-I(X;Z)+I(U;Z)\\ \nonumber
&=&H(Y_{1}|U)-H(Y_{1}|X,U)+H(Z|X)-H(Z|U) \\ \nonumber
&=&H(Y_{1}|U)-H(Z|U)+\frac{n}{2}\log(\frac{N_{3}}{N_{1}})\\ \nonumber
&=&H(X+n_{1}|U)-H(X+n_{3}|U)+\frac{n}{2}\log(\frac{N_{3}}{N_{1}})
\end{eqnarray}
On the other hand using (\ref{l11}), when $Z=0$ and $n_{2}=0$ then $Var(X|U)=\alpha P$. Therefore, According to the above lemma the Gaussian distribution is
optimum and $R_{1}$ is bounded as
\begin{eqnarray}
R_{1}&\leq& \frac{n}{2}\log\left( \frac{\alpha P + N_{1}}{\alpha P + N_{3}}\frac{N_{3}}{N_{1}} \right)\\ \nonumber
&=& C\left(\frac{\alpha P}{N_{1}}\right)+C\left(\frac{(1-\alpha) P}{\alpha P +N_{3}}\right)-C\left(\frac{P}{N_{3}}\right)
\end{eqnarray}
\end{proof}
\section{Conclusion}
A generalization of the wire-tap channel to the case of two
receivers and one eavesdropper is considered. We established an
inner bound for the general (non-degraded) case. This bound matches
Marton's bound on broadcast channels without security constraint.
Furthermore, we considered the scenario in which the channels are
degraded. We established the perfect secrecy capacity region for
this case. The achievability coding scheme is a secret superposition
scheme where randomization in the first layer helps the secrecy of
the second layer. The converse proof combines the converse proof for
the degraded broadcast channel without security constraint and the
perfect secrecy constraint. We proved that the secret superposition
scheme with Gaussian codebook is optimal in AWGN-BCCs. The converse
proof is based on the generalized entropy power inequality and
Csiszar lemma.

\end{document}